\documentclass[12pt]{article}

\usepackage{amsmath}
\usepackage[round]{natbib}
\usepackage{amssymb,color}
\usepackage{appendix}
\usepackage{amsthm}
\usepackage{enumerate}
\usepackage{hyperref}
\usepackage{mathtools} 
\mathtoolsset{showonlyrefs=true}  

\usepackage[margin=0.8in]{geometry}

\numberwithin{equation}{section}

\usepackage{titling}
\settowidth{\thanksmarkwidth}{*}
\setlength{\thanksmargin}{0in}

\newtheorem{thm}{Theorem}[section]

\newtheorem{defi}{Definition}[section]
\newtheorem{assume}{Assumption}
\newtheorem{prop}{Propositioin}[section]

\newtheorem{lemma}{Lemma}[section]

\newtheorem{remark}{Remark}[section]

\begin{document}

\author{Hyungbin Park \thanks{hyungbin@snu.ac.kr, hyungbin2015@gmail.com} \\ \\ \normalsize{Department of Mathematical Sciences,} \\ 
	\normalsize{Seoul National University,}\\
	\normalsize{1 Gwanak-ro, Gwanak-gu, Seoul, Republic of Korea} 
}
\title{A representative agent model based on  risk-neutral prices}

\maketitle

\abstract{

In this paper, we determine a representative agent model based on  risk-neutral information.
 The main idea is that the pricing kernel is transition independent, which is supported by the
well-known capital asset pricing theory. 
Determining  the representative agent model is closely
related to the eigenpair problem of a second-order differential operator. The purpose of
this paper is to find all such eigenpairs which are financially or economically meaningful. We provide a necessary and sufficient condition for the existence
of such pairs, and 
prove that that all the possible eignepairs can
be expressed by a one-parameter family.
Finally, we find a representative agent model derived from the eigenpairs.
}

\section{Introduction}

A representative agent model is an important concept in finance, and many authors used a utility function of the representative agent to solve many finance problems.
However, the choice of the utility function 
 has been a result of   coincidence
rather than of consequence. 
Many authors have used 
the power functions, logarithm functions, exponential functions and their linear combinations  as a prototype of utility function. The reason why they choose one of these is the analytic tractability rather than inevitable economic foundation.  
It would be great if one can find an inevitable reason why we should choose a specific function, and this paper partially answers this.

Our story begins with the pricing kernel. 
In finance, many authors made efforts to illuminate the relation between the return and risk of assets. 
This relation is reflected in a pricing kernel, which  is determined by the interaction of risk preference of market agents. The representative agent theorem says that the interaction can be understood by the utility function of the market representative agent.
The main purpose is to express the utility function of the representative agent.
This paper shares the same idea with the Ross recovery theorem (\cite{carr2012risk}, \cite{park2016ross}, \cite{qin2014positive}, \cite{ross2015recovery},  \cite{walden2017recovery}). They utilize the risk-neutral information which can be obtained from derivative prices to find the pricing kernels. This result is attractive
because they overturned the common belief that the objective measure cannot be inferred from the derivative prices.
Later, we will conclude that the utility function  can be determined from the risk-neutral information as a one-parameter family.

This paper gives an interesting result. Later, in Section  \ref{sec:BS}, we will see that if the stock price follows the standard Black-Scholes model,
then the utility function is determined as the power function.
It is an amazing result that the Black-Scholes model induces the power utility function because two theories have  been  developed independently.

The macroeconomic foundation of determining  utility functions    relies on 
the continuous-time consumption-based {\em capital asset pricing model} (CAPM). The standard argument of the  CAPM  says that the reciprocal of the pricing kernel 
can be expressed as
\begin{equation*}
L_{t}=e^{\beta t}\,\frac{U'(X_{0})}{U'(X_{t})}\,,\;t\ge0
\end{equation*}
where $U(\cdot)$ is the representative agent, $\beta>0$ is the discount rate of the agent and $(X_t)_{t\ge0}$ is the aggregate consumption (or income) process. 
The reader can find out that this expression is a continuous-time version of Eq.(9) in \cite{ross2015recovery}.
By defining 
\begin{equation}\label{eqn:relationship_phi_U}
\phi(x):=\frac{U'(X_{0})}{U'(x)}\;,
\end{equation}
we obtain the transition independent form 
$$L_t=e^{\beta t}\phi(X_t)\,,\;t\ge0$$
stated in Assumption
\ref{assume:transition_indep} later.

The properties of the function $\phi$ are inherited  from the utility function $U$ in the CAPM.
The usual conditions on   utility functions are stated in Eq.\eqref{eqn:usual_cond_on_U}.
Therefore, from the relation in Eq.\eqref{eqn:relationship_phi_U}, the function   $\phi$ 
satisfies:
(i) $\phi>0,$ (ii) $\phi'>0,$ (iii) $\lim_{x\rightarrow0+}\phi(x)=0,$ (iv) $\lim_{x\rightarrow\infty}\phi(x)=\infty\,.$
These four conditions on $\phi$ will be also called by the {\em usual conditions}.
This issue will be discussed in Section \ref{sec:usual}.

One of the most important observations is that the problem of finding the pair 
$(\beta,\phi)$ in the transition independent form 
is transformed into  an eigenvalue/eigenfunction 
problem of an operator. 
Motivated by the paper of \cite{carr2012risk}, we will see that $(\beta,\phi)$ satisfies a second-order differential equation
\begin{equation} \label{eqn:DE}
\frac{1}{2}\sigma^2(x){\phi ''(x)}+k(x)\phi '(x)
-r(x)\phi (x) =-\beta \,\phi (x)\,.
\end{equation}
This differential equation is a continuous version of the Perron-Frobenius type equation given at Eq.(15) in \cite{ross2015recovery}.

One of the main purposes of the present paper is to find the solution pairs of Eq.\eqref{eqn:DE} with  $\phi$ satisfying the usual conditions.
The set of all such pairs is called the {\em usual set} and is denoted by $\mathcal{U}.$ 
The usual set is closely related to the behavior of the underlying process $X,$ especially the Feller boundary classification of  $X.$ 
In Section \ref{sec:usual}, the usual set $\mathcal{U}$  will be characterized in the terms of the boundary classifications of $X.$

There are two main contributions in the paper. First,
we provide a necessary and sufficient condition for the existence of such pairs  in the terms of the boundary classifications of $X.$ 
Moreover, we will offer a representation of all such pairs $(\beta,\phi).$ 
It will be proven that the possible set can be expressed by a one-parameter family with upper bounded parameter.
Second, our result 
gives an interesting implication on calibrating the objective dynamics.
By now, we have used non-parametric or heuristic parametric models to calibrate the dynamics of $X_t$ under objective measure.
The results of this paper tells us that one can use parametric models, which are economically derived, to calibrate the objective dynamics. 

The rest of this paper is structured as follows.
Section \ref{sec:finan_market} describes the underlying market model of this paper and define objective measures.
Section \ref{sec:analy} gives the main result of this paper. We transform the problem of determining the utility functions as eigenpair problems of a second-order differential operator. And then, financially meaningful eigenpairs are chosen. Section \ref{sec:recovery_thm} applies the 
 main result to find a representative agent model.
Section \ref{sec:ex} present several examples and the last section summarizes this paper. All proofs are in appendices.


\section{Financial market models}
\label{sec:finan_market}

\subsection{Risk-neutral markets}
A {\em risk-neutral} financial market is defined as a probability space
$(\Omega,(\mathcal{F}_{t})_{t\ge0},\mathbb{Q})$ having a one-dimensional Brownian motion $W=(W_t)_{t\ge0}$
with the usual completed filtration $(\mathcal{F}_{t})_{t\ge0}$ generated by $W.$
All the processes
in this article are assumed to be adapted to this filtration $(\mathcal{F}_t)_{t\ge0}.$ 
We fix a positive process $G=(G_t)_{t\ge0},$ which is called a {\em numeraire}.  
A process $S$ is said to be an {\em asset} if the discounted process $S/G$ is a martingale under $\mathbb{Q}.$ 
It is customary that this measure $\mathbb{Q}$ is referred to as a risk-neutral measure when $G$ is a money market account.

\begin{assume}\label{assume:X} 
	Let $b$ and $\sigma$ be two continuously differentiable functions on $(0,\infty).$ Assume that the stochastic differential equation (SDE)
	$$dX_{t}=b(X_{t})\,dt+\sigma(X_{t})\,dW_{t}\,,\;X_0=\xi$$
	has a unique strong solution and that the solution is non-explosive on $(0,\infty).$
	This process $X$ is referred to as a {\em state variable} in the market.	
\end{assume}
\noindent It is well-known that the process $X$ is
a univariate time-homogeneous Markov diffusion. 
The non-explosiveness means that     both boundaries $0$ and $\infty$ are inaccessible (Definition \ref{def:boundary_clasify}). This boundary condition is plausible because the aggregate consumption in the CAPM does not reach $0$ and $\infty$  in finite time.

\begin{assume} \label{assume:G}
	The state variable $X$ determines the 
	dynamics of the numeraire $G.$ More precisely,
	assume that there are continuously differentiable 
	  functions $r$ and $v$   on $(0,\infty)$
	such that
	$$G_t=e^{\int_0^t(r(X_s)+\frac{1}{2}v(X_s))\,ds+\int_0^tv(X_s)\,dW_s}\,.$$
	In the SDE form, 
	 the process $G$ follows
	\begin{equation}\label{eqn:G}
	\frac{dG_t}{G_t}=(r(X_t)+v^2(X_t))\,dt+v(X_t)\,dW_t\;,\; G_0=1\;.
	\end{equation} 
	Assume that   a local martingale
	$$(e^{-\frac{1}{2}\int_{0}^t v^2(X_s)\,ds-\int_0^t v(X_s)\,dW_s})_{t\ge0}$$
	is a martingale under $\mathbb{Q}.$
\end{assume}
\noindent The drift term in Eq.\eqref{eqn:G} may look unnatural, but it does not (\cite{park2016ross}).

\subsection{Objective measures}

Given a risk-neutral market  $(\Omega,(\mathcal{F}_{t})_{t\ge0},\mathbb{Q})$ satisfying A\ref{assume:X} - \ref{assume:G}, we want to find possible objective measures satisfying A\ref{assume:transition_indep} - \ref{assume:U} stated below.

\begin{assume}\label{assume:transition_indep}
	There are a real number $\beta$ and a twice continuously differentiable function $\phi$ with $\phi(\xi)=1$  such that 
		$$e^{\beta t}\phi(X_t)/G_t\,,\;t\ge0$$
is a martingale under $\mathbb{Q}.$
\end{assume}

\begin{defi} 
		Let $(\beta,\phi)$ be a pair satisfying A\ref{assume:transition_indep}. A measure $\mathbb{P}^\phi$ on each $\mathcal{F}_t$ defined by
	$$\frac{d\mathbb{P}^\phi}{d\mathbb{Q}\,}\Big|_{\mathcal{F}_t}={e^{\beta t}\phi(X_t)}/{G_t}$$
	is called the {\em transformed  measure} or the {\em objective measure} with respect to $\phi.$   
\end{defi}

\noindent This measure $\mathbb{P}^\phi$ is well-defined for all $T\ge0$ since $\mathbb{E}^\mathbb{Q}(\mathbb{I}_AM_t)=\mathbb{E}^\mathbb{Q}(\mathbb{I}_AM_T)$
holds for $A\in\mathcal{F}_t$ and $0\le t\le T$ where $M_t:={e^{\beta t}\phi(X_t)}/{G_t}.$
\begin{remark} In this definition, the pair $(\beta,
	\phi)$ determines the objective measure $\mathbb{P}^\phi.$
	The reader may wonder why we use the superscript $\mathbb{P}^\phi$ instead of $\mathbb{P}^{(\beta,\phi)}.$ Later we will see that in fact the number $\beta$ is automatically determined by $\phi,$ thus the notation $\mathbb{P}^{(\beta,\phi)}$ is abundant.
\end{remark}

\begin{remark}
	 Assumption \ref{assume:transition_indep} means that the pair $(\beta,\phi)$ can be understood as an eigenpair of an operator.
	 Define a pricing operator $\mathcal{P}_T$ by 
	 $$\mathcal{P}_Tf(x):=\mathbb{E}^\mathbb{Q}_{X_0=x}(f(X_T)/G_T)\,,$$
	 then
	 $$\mathcal{P}_T\phi(x)=\mathbb{E}^\mathbb{Q}_{X_0=x}(\phi(X_T)/G_T)=e^{-\beta T}\mathbb{E}^\mathbb{Q}_{X_0=x}(e^{\beta T}\phi(X_T)/G_T)=e^{-\beta T}\phi(x)\,.$$
	 The last equality holds because $(e^{\beta t}\phi(X_t)/G_t)_{0\le t\le T}$ is a martingale and its time-$0$ value is $\phi(x).$	 
\end{remark}

\begin{assume}\label{assume:trans_X}
	Under an objective  measure $\mathbb{P}^\phi,$ 
	the process $X=(X_t)_{t\ge0}$ approaches to infinity as $t\rightarrow\infty$ with probability $1.$  
\end{assume}

\noindent This assumption may be debatable,
but typical empirical data says  the aggregative consumption grows as time goes.
Thus, it is reasonable to assume that the aggregate consumption process is not recurrent nor converge to zero as time goes under   objective measures. 
It is well known that any transient time-homogeneous Markov diffusion process which does not converge to the left boundary always approaches to the right boundary as $t\rightarrow\infty.$ In this respect, this assumption  is plausible.

We now demonstrate reasonable assumptions on $\phi.$ 
The usual conditions on utility function $U(x)$ are as follows:
(i) $U'>0$ (ii) $U''<0$ (iii) $\lim_{x\rightarrow0+}U'(x)=\infty$ (iv) $\lim_{x\rightarrow\infty}U'(x)=0.$
From $\phi(x)=U'(X_0)/U'(x),$ we obtain the following assumptions by direct calculation.
\begin{assume}\label{assume:U}
	The function $\phi$  satisfies the following conditions
	\begin{enumerate}[(i)]
		\item $\phi>0$
		\item $\phi'>0$
		\item$\lim_{x\rightarrow0+}\phi(x)=0$
		\item $\lim_{x\rightarrow\infty}\phi(x)=\infty\,.$ 	
	\end{enumerate}
	In this paper, the four conditions above are called the {\em usual conditions} on $\phi.$ 
\end{assume}
\noindent It can be easily checked that a function $\phi$ satisfies the usual conditions if and only if $U(x):=\int_\cdot^x1/\phi(y)\,dy$  satisfies the usual conditions on utility function.

\begin{defi}
	For a real number $\beta$ and a positive function $\phi\in C^2(0,\infty),$ 
	we say   $(\beta,\phi)$ is an admissible pair if the pair satisfies A\ref{assume:transition_indep} - \ref{assume:U}. 
\end{defi}

\section{Analyzing the model}
\label{sec:analy}
The main purpose of this paper is to find all admissible pairs $(\beta,\phi)$  (i.e., satisfying  A\ref{assume:transition_indep} - \ref{assume:U})
for given risk-neutral market  $(\Omega,(\mathcal{F}_{t})_{t\ge0},\mathbb{Q})$ with A\ref{assume:X} - \ref{assume:G}.
The main contribution of this article is to investigate A\ref{assume:U}. 
The followings are of interest to us.
\begin{enumerate}[(i)]
	\item  Find sufficient conditions on  the risk-neutral market satisfying A\ref{assume:X} - \ref{assume:G} such that there exists an admissible pair.
	\item  If such  admissible pairs exist, express all such pairs.
\end{enumerate}
It will be proven that such pairs can be expressed by a one-parameter family.

\subsection{Second-order differential operator}

In this section, we observe that the problem of finding pairs $(\beta,\phi)$ satisfying A\ref{assume:transition_indep}  is transformed into a problem of finding eigenpairs of a second-order differential operator. Define an operator $\mathcal{L}$ by
\begin{equation}\label{eqn:L}
\mathcal{L}\phi(x)=\frac{1}{2}\sigma^2(x){\phi ''(x)}+k(x)\phi '(x)
-r(x)\phi (x)\,, 
\end{equation}
where 
$k:=b-\sigma v.$ 
See \cite{park2016ross} for proof of the following theorem.
\begin{thm}\label{thm:ode}
	Let $\beta$ be a real number and $\phi$ be a twice continuously differentiable function.
	If the pair $(\beta,\phi)$  satisfies Assumption \ref{assume:transition_indep}, then $(\beta,\phi)$ is an eigenpair of the operator $-\mathcal{L}.$ Conversely, if $(\beta,\phi)$ is an eigenpair of the operator $-\mathcal{L},$ then 	$$e^{\beta t}\phi(X_t)/G_t\,,\;t\ge0$$
	is a local martingale under $\mathbb{Q}.$
\end{thm}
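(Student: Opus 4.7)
The plan is to apply Itô's formula directly to $M_t := e^{\beta t}\phi(X_t)/G_t$, read off its drift and diffusion parts, and conclude both directions of the theorem by equating the drift to zero.

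First I would compute the semimartingale decomposition of the reciprocal numeraire. Applying Itô to $1/G_t$ using Assumption \ref{assume:G} yields, after the $v^2$-drift of $G$ cancels against the quadratic-variation correction,
$$d(1/G_t) = \frac{1}{G_t}\bigl(-r(X_t)\,dt - v(X_t)\,dW_t\bigr).$$
Separately, Itô applied to $e^{\beta t}\phi(X_t)$ using the SDE in Assumption \ref{assume:X} gives
$$d\bigl(e^{\beta t}\phi(X_t)\bigr) = e^{\beta t}\Bigl\{\beta\phi + b\phi' + \tfrac{1}{2}\sigma^2\phi''\Bigr\}(X_t)\,dt + e^{\beta t}\sigma(X_t)\phi'(X_t)\,dW_t.$$
The product rule, together with the covariation term $-(e^{\beta t}/G_t)\sigma(X_t)v(X_t)\phi'(X_t)\,dt$, then collapses the drift into the key identity
$$dM_t = \frac{e^{\beta t}}{G_t}\bigl(\mathcal{L}\phi + \beta\phi\bigr)(X_t)\,dt + \frac{e^{\beta t}}{G_t}\bigl(\sigma\phi' - v\phi\bigr)(X_t)\,dW_t,$$
where precisely the combinations $k = b-\sigma v$ and $r$ appearing in $\mathcal{L}$ in (\ref{eqn:L}) arise from bundling the cross terms.

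For the converse direction, if $(\beta,\phi)$ is an eigenpair of $-\mathcal{L}$, the drift above vanishes identically and $M$ is a continuous stochastic integral, hence a local martingale, which is the claimed conclusion. For the forward direction, I would invoke uniqueness of the semimartingale decomposition: if $M$ is a true martingale under $\mathbb{Q}$, its finite-variation part must be identically zero, so $(\mathcal{L}\phi + \beta\phi)(X_s) = 0$ for Lebesgue-almost every $s$ almost surely. Since $\mathcal{L}\phi + \beta\phi$ is continuous on $(0,\infty)$ and the non-explosive diffusion $X$ with non-degenerate $\sigma$ visits a dense subset of $(0,\infty)$ with positive probability, this pathwise identity upgrades to the pointwise eigenvalue equation $-\mathcal{L}\phi = \beta\phi$ on all of $(0,\infty)$.

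The Itô bookkeeping is routine, and the substantive work is checking that the particular bundling of terms reproduces exactly the operator $\mathcal{L}$ with the $\sigma v$-correction in $k$. The only conceptually delicate step is the last one in the forward direction, where one passes from an almost-sure pathwise identity to a pointwise equation on the whole state space; this is where the non-degeneracy of the diffusion and the inaccessibility of both boundaries (built into Assumption \ref{assume:X}) are essential.
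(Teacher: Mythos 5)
Your proposal is correct, and it is the standard route to this result; note that the paper itself does not prove Theorem \ref{thm:ode} but defers to \cite{park2016ross}, so there is no in-paper argument to compare against line by line. Your It\^o computation checks out: $d(1/G_t)=G_t^{-1}(-r\,dt-v\,dW_t)$ after the $v^2$ cancellation, and the cross-variation term is exactly what converts $b$ into $k=b-\sigma v$, giving $dM_t=\frac{e^{\beta t}}{G_t}\bigl(\mathcal{L}\phi+\beta\phi\bigr)(X_t)\,dt+\frac{e^{\beta t}}{G_t}\bigl(\sigma\phi'-v\phi\bigr)(X_t)\,dW_t$, from which the converse (vanishing drift $\Rightarrow$ local martingale) is immediate. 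The only step deserving care is the forward direction, and you have identified it correctly: uniqueness of the decomposition of the continuous semimartingale $M$ forces the finite-variation part $\int_0^t \frac{e^{\beta s}}{G_s}(\mathcal{L}\phi+\beta\phi)(X_s)\,ds$ to vanish identically, and since the integrand is continuous in $s$ this gives $(\mathcal{L}\phi+\beta\phi)(X_s)=0$ for all $s$ a.s.; to upgrade this to the pointwise equation on all of $(0,\infty)$ you need that every interior point is hit with positive probability, i.e.\ regularity of the diffusion, which requires $\sigma^2>0$ on $(0,\infty)$. This non-degeneracy is only implicit in Assumption \ref{assume:X} (though the paper clearly relies on it elsewhere, e.g.\ in Definition \ref{def:boundary_clasify} where $1/\sigma^2$ appears), so you should state it explicitly; also, inaccessibility of the boundaries is not really what is essential here --- regularity is. An alternative way to finish the forward direction, closer in spirit to the paper's Remark on the pricing operator, is to use the Markov property to get $\mathcal{P}_t\phi(x)=e^{-\beta t}\phi(x)$ and differentiate at $t=0$; your semimartingale-uniqueness argument is equally valid and arguably more self-contained.
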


\noindent Through this paper, a generic solution pair of the differential equation $\mathcal{L}h=-\lambda h$ will be denoted by  $(\lambda,h).$ This theorem guides the strategy of this paper. We first restrict our attention to the solution pairs $(\lambda,h)$ of   $\mathcal{L}h=-\lambda h$
with   $h>0$ and $h(X_0)=1,$ then we exclude pairs $(\lambda,h)$ which does not satisfy  A\ref{assume:transition_indep} - \ref{assume:U}.

\begin{thm}\label{thm:positive_r}
	If $r(x)\geq0,$  
	there exists a number $\overline{\lambda}\geq0.$  such that it
	has two linearly independent positive solutions for $\lambda<\overline{\lambda},$ has no positive solution for
	$\lambda>\overline{\lambda}$ and has one or two linearly independent solutions for $\lambda=\overline{\lambda}.$  
\end{thm}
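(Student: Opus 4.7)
The strategy is to identify $\overline{\lambda}$ with the generalized principal eigenvalue of $-\mathcal{L}$ on $(0,\infty)$ and then deduce the three regimes from the classical criticality theory for one-dimensional second-order linear elliptic operators. Define
\[
\overline{\lambda} \;:=\; \sup\bigl\{\lambda\in\mathbb{R} \,:\, \mathcal{L}h+\lambda h=0 \text{ admits some } C^{2} \text{ solution } h>0 \text{ on } (0,\infty)\bigr\}.
\]
With this definition, non-existence of positive solutions for $\lambda>\overline{\lambda}$ is immediate. The remaining work is to show (i) $\overline{\lambda}\ge 0$, (ii) two linearly independent positive solutions exist for every $\lambda<\overline{\lambda}$, and (iii) at least one positive solution exists at $\lambda=\overline{\lambda}$.

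For (i), the assumption $r\ge 0$ makes the constant $h\equiv 1$ a supersolution at $\lambda=0$, since $\mathcal{L}(1)=-r(x)\le 0$. To upgrade this supersolution to a genuine positive solution, I would solve the Dirichlet problems $\mathcal{L}u_n=0$ on bounded intervals $(a_n,b_n)\uparrow(0,\infty)$ with boundary data $u_n\equiv 1$, use the comparison principle (valid thanks to $r\ge 0$) to keep $0<u_n\le 1$ in the interior, and extract a locally $C^{2}$-convergent subsequence. Normalising at a reference point gives a positive solution at $\lambda=0$, whence $\overline{\lambda}\ge 0$.

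For (ii), first observe that the admissible set $\Lambda:=\{\lambda:\exists\, h>0 \text{ with } \mathcal{L}h+\lambda h=0\}$ is downward closed: if $h_\mu>0$ solves at $\mu$ and $\lambda<\mu$, then $\mathcal{L}h_\mu+\lambda h_\mu=(\lambda-\mu)h_\mu<0$, so $h_\mu$ is a strict supersolution at $\lambda$ and the same Perron-type construction yields a positive solution at $\lambda$. Thus $\Lambda$ is an interval unbounded below with supremum $\overline{\lambda}$. To produce two linearly independent positive solutions for $\lambda<\overline{\lambda}$, pick $\mu\in(\lambda,\overline{\lambda})$ and use $h_\mu$ as a global positive barrier to build two ``minimal'' positive solutions: $h_-$ as the limit of Dirichlet solutions of $\mathcal{L}u+\lambda u=0$ on $(a_n,b]$ with left data $0$ and right data $1$, and $h_+$ symmetrically on $[a,b_n)$. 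Their linear independence will follow because the strict supersolution $h_\mu$ forces $h_-(x)/h_\mu(x)\to 0$ as $x\to 0^+$ while $h_+(x)/h_\mu(x)$ stays bounded away from $0$ there, with the roles reversed as $x\to\infty$.

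For (iii), choose $\lambda_n\uparrow\overline{\lambda}$ with positive solutions $h_n$ normalised by $h_n(x_0)=1$; local ODE estimates give an equicontinuous family, and a $C^2_{\mathrm{loc}}$-convergent subsequence has a positive limit satisfying the equation at $\overline{\lambda}$. Whether the positive cone there is one- or two-dimensional reflects the classical criticality-versus-subcriticality dichotomy at the top of $\Lambda$; both possibilities can occur, matching the theorem's ``one or two'' alternative. The main obstacle is the linear-independence step in (ii): a priori the two minimal Dirichlet constructions could collapse to the same positive solution, and ruling this out for $\lambda<\overline{\lambda}$ depends delicately on the strict-supersolution separation above together with the boundary behaviour of $\sigma,k,r$ near the two inaccessible endpoints $0$ and $\infty$ (Assumption \ref{assume:X}).
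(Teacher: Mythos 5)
Your framework is the right one---what you are rediscovering is exactly the criticality theory for one-dimensional second-order operators that the paper itself invokes (it gives no proof of Theorem \ref{thm:positive_r}, but cites pp.~146 and 149 of \cite{pinsky1995positive}). The pieces that work are: defining $\overline{\lambda}$ as the supremum of the admissible $\lambda$'s, which makes non-existence above $\overline{\lambda}$ immediate; using $r\ge 0$ so that $h\equiv 1$ is a positive supersolution at $\lambda=0$ and upgrading it by Dirichlet exhaustion plus a Harnack-type normalization; downward closedness of the admissible set; and the normalized limiting argument producing at least one positive solution at $\lambda=\overline{\lambda}$. But the heart of the theorem is the claim that for \emph{every} $\lambda<\overline{\lambda}$ there are \emph{two} linearly independent positive solutions, and this is precisely the step you leave open: you concede that your two one-sided minimal Dirichlet constructions $h_-$ and $h_+$ could collapse to the same solution, and the asserted boundary behaviour ($h_-/h_\mu\to 0$ at $0^+$ while $h_+/h_\mu$ stays bounded away from $0$) is stated without proof and does not follow from anything you established. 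Moreover, locating the difficulty in the fine behaviour of $\sigma,k,r$ near the endpoints is off target: the ``two solutions strictly below $\overline{\lambda}$'' dichotomy holds for any such operator, independently of the Feller boundary classification (which the paper only needs later, in Section \ref{sec:usual}).

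The standard way to close the gap---and the content of the cited pages of \cite{pinsky1995positive}---is the subcritical/critical dichotomy: if $\lambda<\mu\le\overline{\lambda}$ and $h_\mu>0$ solves the equation at $\mu$, then $h_\mu$ is a positive \emph{strict} supersolution at $\lambda$ that is not a solution; a critical operator admits no such strict supersolution (its ground state is, up to scalar multiples, its only positive supersolution), so $\mathcal{L}+\lambda$ must be subcritical, and in one dimension subcriticality is equivalent to the positive cone containing two linearly independent elements. Alternatively, one can argue purely by reduction of order: if $h>0$ solves $\mathcal{L}h=-\lambda h$, every solution has the form
\begin{equation}
h(x)\left(c_1+c_2\int_{\xi}^{x}\frac{\gamma(s)}{h^{2}(s)}\,ds\right),\qquad \gamma(s)=e^{-\int_{\xi}^{s}\frac{2k(u)}{\sigma^{2}(u)}\,du},
\end{equation}
so a second linearly independent positive solution exists if and only if $\int_{0}^{\xi}\gamma h^{-2}<\infty$ or $\int_{\xi}^{\infty}\gamma h^{-2}<\infty$; one must then show that divergence of \emph{both} integrals (i.e.\ uniqueness of the positive solution up to multiples) can occur only at $\lambda=\overline{\lambda}$, and this is exactly where the positive solution at some $\mu>\lambda$ has to be used. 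Without one of these arguments the three-regime statement of the theorem is not established.
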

\noindent Define $\overline{r}:=\inf_{x>0}r(x).$ By considering $r(x)-\overline{r},$ it is obtained that
\begin{equation}\label{eqn:lower_bound}
\overline{\lambda}\geq\overline{r}\;.
\end{equation}
Refer to page 146 and 149 in \cite{pinsky1995positive} for proof. 
The condition that  $r(x)\geq0$  is financially reasonable because the short interest rate $r(X_t)$ is usually nonnegative in practice.

We express the pair $(\lambda,h)$ more efficiently by the following way. 
A solution $h$ of a second-order differential equation is uniquely determined by the initial value $h(X_0)$ 
and the initial derivative $h'(X_0).$
By normalizing, we may assume $h(X_0)=1,$ then  a solution is determined by $h'(X_0).$
Thus, a solution pair $(\lambda,h)$ can be represented by $(\lambda,h'(X_0))$ under the assumption that $h(X_0)=1.$
Occasionally we use the terminology without ambiguity:
the {\em tuple}
$(\lambda,h'(X_0))$ is corresponding to the  {\em pair} $(\lambda,h).$
The two terms tuple and pair will be used to distinguish between these meanings.
Using the notion of tuples, we define the set of all admissible tuples by the following way.
$$\mathcal{A}:=\left.\left\{(\lambda,h'(\xi))\in\mathbb{R}^{2}\right| (\lambda,h) \textnormal{ satisfies Assumption  \ref{assume:transition_indep}, \ref{assume:trans_X} and \ref{assume:U}}  \right\}\,.$$

Motivated by Theorem \ref{thm:ode}, we first consider the set of  all solution pairs $(\lambda,h)$ of $\mathcal{L}h=-\lambda h$ with $h>0.$ We recall that $X_0=\xi$ in Assumption \ref{assume:X}. 

\begin{defi} 
	We say $(\lambda,h'(\xi))\in \mathbb{R}^{2}$ is a candidate tuple or we say
	$(\lambda,h)$ is a candidate pair
	if
	$(\lambda,h)$ is a solution pair of $\mathcal{L}h=-\lambda h$ with
	$h(\cdot)>0$ and $h(\xi)=1.$ 
	Denote the set of the candidate tuples by $\mathcal{C}.$
	$$\mathcal{C}:=\left.\left\{(\lambda,h'(\xi))\in\mathbb{R}^{2}\right| \mathcal{L}h=-\lambda 
	h,\, h>0,\, h(\xi)=1 \right\}\;.$$
\end{defi}  
\noindent We briefly state properties
of $\mathcal{C}.$
Theorem \ref{thm:positive_r} guarantees that the set $\mathcal{C}$ is nonempty if  $r(x)\geq0.$ 
Recall that $\overline{\lambda}$ be the maximum value of the first coordinate of elements of 
$\mathcal{C},$ that is, $\overline{\lambda}=\max\{\,\lambda \,|\,(\lambda,z)\in\mathcal{C} \,\}.$
For any $\lambda$ with $\lambda\leq \overline{\lambda},$
we set
$$M_{\lambda}:=\sup_{(\lambda,z)\in\mathcal{C}}z\,,\;\;
m_{\lambda}:=\inf_{(\lambda,z)\in\mathcal{C}}z\;.$$
Occasionally, we use notations $M(\lambda)$ and $m(\lambda)$ instead of $M_\lambda$ and $m_\lambda,$ respectively, to avoid double subscripts such as $h_{M_\lambda}$ and $h_{m_\lambda}.$

\begin{prop} \label{prop:slice}
	Let $\lambda\leq\overline{\lambda}.$
	For any $z$ with $m_{\lambda}\leq z\leq M_{\lambda},$ the tuple $(\lambda,z)$ is in 
	$\mathcal{C}.$
\end{prop}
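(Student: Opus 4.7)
The plan is to identify the set
\[
S_\lambda := \{\, z \in \mathbb{R} : (\lambda, z) \in \mathcal{C} \,\}
\]
and show that it is both convex and closed in $\mathbb{R}$; since $m_\lambda = \inf S_\lambda$ and $M_\lambda = \sup S_\lambda$ by definition, this will force $S_\lambda = [m_\lambda, M_\lambda]$, which is the statement of the proposition.

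For convexity, I would exploit the fact that $\mathcal{L}$ is a linear differential operator. Given $z_1, z_2 \in S_\lambda$ with corresponding positive solutions $h_1, h_2$ normalized by $h_i(\xi) = 1$ and $h_i'(\xi) = z_i$, any convex combination $h_\alpha := \alpha h_1 + (1-\alpha) h_2$, for $\alpha \in [0,1]$, is again a positive solution of $\mathcal{L}h_\alpha = -\lambda h_\alpha$ satisfying $h_\alpha(\xi) = 1$ and $h_\alpha'(\xi) = \alpha z_1 + (1-\alpha) z_2$. As $\alpha$ ranges over $[0,1]$, this derivative sweeps out the entire interval $[z_1, z_2]$, so every such value lies in $S_\lambda$.

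For closedness, I would pick a sequence $z_n \in S_\lambda$ with $z_n \to z^\ast$ and denote by $h_n$ the solution of $\mathcal{L}h_n = -\lambda h_n$ with $h_n(\xi) = 1$ and $h_n'(\xi) = z_n$. Because $\sigma^2, k, r$ are continuous (with $\sigma^2 > 0$) on $(0,\infty)$, standard continuous dependence on initial data for linear ODEs gives $h_n \to h^\ast$ uniformly on compact subsets of $(0,\infty)$, where $h^\ast$ is the unique solution with initial data $(1, z^\ast)$ at $\xi$. Passing to the limit yields $h^\ast \geq 0$, and the only remaining issue is to upgrade nonnegativity to strict positivity.

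The key point, and the one I expect to require the most care, is ruling out that $h^\ast$ touches $0$ somewhere. If $h^\ast(x_0) = 0$ for some $x_0 \in (0,\infty)$, then since $h^\ast \geq 0$ near $x_0$ the point $x_0$ is an interior minimum, forcing $h^{\ast\prime}(x_0) = 0$ as well. By uniqueness for the linear second-order ODE $\mathcal{L}h = -\lambda h$ with zero Cauchy data at $x_0$, this would give $h^\ast \equiv 0$ on $(0,\infty)$, contradicting $h^\ast(\xi) = 1$. Hence $h^\ast > 0$ throughout $(0,\infty)$, so $z^\ast \in S_\lambda$, and $S_\lambda$ is closed. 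Combined with Step 1, this yields $S_\lambda = [m_\lambda, M_\lambda]$, completing the proof.
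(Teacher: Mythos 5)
Your proposal is correct, and its core step is the same one the paper gestures at: the paper disposes of this proposition with the remark that $\mathcal{L}h=-\lambda h$ is linear with a two-dimensional solution space, so positivity and the normalization $h(\xi)=1$ are preserved under the relevant combinations, deferring the rigorous argument to the cited reference. Your convexity step (taking $\alpha h_1+(1-\alpha)h_2$ and sweeping the initial slope over $[z_1,z_2]$) is exactly that idea made explicit. What you add beyond the paper's sketch is the closedness argument, and this is genuinely needed for the statement as written: convex combinations alone only give the slopes strictly between attained values, whereas the proposition asserts that the endpoints $m_\lambda$ and $M_\lambda$ themselves belong to the slice. Your treatment of that step is sound: continuous dependence on initial data gives locally uniform convergence $h_n\to h^\ast$, hence $h^\ast\ge 0$, and a zero of $h^\ast$ would be an interior minimum with vanishing derivative, forcing $h^\ast\equiv 0$ by uniqueness for the Cauchy problem, contradicting $h^\ast(\xi)=1$. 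Two small caveats, neither fatal: both the reduction to standard ODE form and the uniqueness argument use $\sigma^2>0$ on $(0,\infty)$, which the paper assumes only implicitly (its boundary functions $\gamma,Q,R$ divide by $\sigma^2$), so it is worth stating; and your conclusion $S_\lambda=[m_\lambda,M_\lambda]$ tacitly treats $m_\lambda,M_\lambda$ as finite---if one of them were infinite the convexity step already covers every real $z$ in the stated range, so the proposition still follows, but the identification of the slice as a compact interval (used later in the paper) would require a separate finiteness argument, which the paper obtains from the cited reference.
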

\noindent The above proposition can be easily shown by the fact that the equation $\mathcal{L}h=-\lambda h$ has two linearly independent solutions and any solution can be expressed by the linear combinations of the two solutions.
For rigorous proof, refer to \cite{park2016ross}.
Therefore,  the $\lambda$-slice of $\mathcal{C}$ is a connected and compact set.

\subsection{Transformed measures}
\label{sec:transformed}
For a candidate pair $(\lambda,h),$
Theorem \ref{thm:ode} says that  
$(e^{\lambda t}\,h(X_t)/G_t)_{t\ge0}$ 
is a local martingale under $\mathbb{Q}.$

\begin{prop} \label{prop:dynamics_under_P}
		Let $(\lambda,h)$ be a candidate pair such that
	$(e^{\lambda t}\,h(X_t)/G_t)_{t\ge0}$ is a martingale (that is, $(\lambda,h)$ satisfies Assumption \ref{assume:transition_indep}). Then 	
	a process $(B_t^{h})_{t\ge0}$ defined by
	$dB_{t}^{h}=-(\sigma h'h^{-1}-v)(X_{t})\,dt+dW_{t}$
	is a Brownian motion under the transformed measure $\mathbb{P}^h.$
	In this case,   the $\mathbb{P}^h$-dynamics of $X$ is
	\begin{equation}\label{eqn:X_under_P}
	\begin{aligned}
	dX_{t}&=(b-v\sigma+\sigma^{2}h'h^{-1})(X_{t})\,dt+\sigma(X_{t})\,dB_{t}^{h}\\
	&=(k+\sigma^{2}h'h^{-1})(X_{t})\,dt+\sigma(X_{t})\,dB_{t}^{h}\,.
	\end{aligned}
	\end{equation}
\end{prop}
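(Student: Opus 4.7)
The strategy is a direct application of Girsanov's theorem, so the work reduces to identifying the stochastic logarithm of $M_t := e^{\lambda t}h(X_t)/G_t$ and then making the standard change-of-drift substitution in the SDE for $X$.

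First, I would apply Ito's formula to $M_t$. Writing $M_t$ as a product of $Y_t := e^{\lambda t}h(X_t)$ and $Z_t := 1/G_t$, the Ito expansions give
\begin{align}
dY_t &= e^{\lambda t}\bigl[\lambda h(X_t) + b(X_t)h'(X_t) + \tfrac{1}{2}\sigma^2(X_t)h''(X_t)\bigr]dt + e^{\lambda t}\sigma(X_t)h'(X_t)\,dW_t,\\
dZ_t &= -\frac{r(X_t)}{G_t}\,dt - \frac{v(X_t)}{G_t}\,dW_t,
\end{align}
where the last line uses Assumption \ref{assume:G} and Ito's formula for $1/G_t$. Combining via the product rule and collecting terms, the drift of $dM_t/M_t$ reduces to $\lambda + \mathcal{L}h(X_t)/h(X_t)$, which vanishes since $(\lambda,h)$ solves $\mathcal{L}h = -\lambda h$; the martingale part equals $M_t(\sigma h'/h - v)(X_t)\,dW_t$. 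This reconfirms that $M$ is a local martingale and, more importantly, identifies
\begin{equation}
M_t = \mathcal{E}\!\left(\int_0^{\cdot} (\sigma h'h^{-1} - v)(X_s)\,dW_s\right)_{\!t}.
\end{equation}

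Next, the hypothesis that $M_t$ is a true martingale is precisely the integrability requirement needed to invoke Girsanov's theorem. Setting $d\mathbb{P}^h/d\mathbb{Q}|_{\mathcal{F}_t} = M_t$, Girsanov gives that
\begin{equation}
B_t^h := W_t - \int_0^t (\sigma h'h^{-1} - v)(X_s)\,ds
\end{equation}
is a $\mathbb{P}^h$-Brownian motion, which in differential form is exactly the stated $dB_t^h = -(\sigma h'h^{-1} - v)(X_t)\,dt + dW_t$.

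Finally, substituting $dW_t = dB_t^h + (\sigma h'h^{-1} - v)(X_t)\,dt$ into the original $\mathbb{Q}$-SDE $dX_t = b(X_t)\,dt + \sigma(X_t)\,dW_t$ and using $k = b - \sigma v$ yields the displayed $\mathbb{P}^h$-dynamics of $X$. There is no real obstacle here beyond bookkeeping in the Ito calculation; the crucial input is the identity $\mathcal{L}h = -\lambda h$, which is what makes the drift of $M$ vanish and thereby makes the Girsanov density clean enough to read off the new drift of $X$. The martingale hypothesis (as opposed to local martingale) is essential only for the Girsanov step, and it is imposed by the statement.
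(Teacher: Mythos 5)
Your proof is correct and is exactly the standard Itô-plus-Girsanov argument: identify $e^{\lambda t}h(X_t)/G_t$ as the stochastic exponential of $\int(\sigma h'h^{-1}-v)(X_s)\,dW_s$ (the drift vanishing precisely because $\mathcal{L}h=-\lambda h$), then change measure and substitute back into the SDE for $X$. The paper states this proposition without a detailed proof (deferring to standard arguments in the cited reference), and your computation is the same route, with all signs and the identification $k=b-\sigma v$ handled correctly.
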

\noindent Occasionally, we use the notation $\mathbb{P}$ and $(B_t)_{t\ge0}$ instead of $\mathbb{P}^h$ and $(B_t^h)_{t\ge0},$ respectively, without ambiguity.
Even when $(e^{\lambda t}\,h(X_t)/G_t)_{t\ge0}$ is not a martingale, we can consider the diffusion process in Eq.\eqref{eqn:X_under_P}.
\begin{defi}\label{defi:induced_diffusion}
	The diffusion process $(X_t)_{t\ge0}$ defined by 
	$$dX_t=(k+\sigma^2 h'/h)(X_t)\,dt+\sigma(X_t)\,dB_t$$
	is called the diffusion process {\em induced by} the pair $(\lambda,h)$ or the tuple $(\lambda,h'(\xi))$. 
\end{defi}

\subsection{Divergence to infinity}
We shift our attention to Assumption \ref{assume:trans_X}.
The following theorem specifies which candidate tuples induce transformed measures satisfying Assumption \ref{assume:trans_X}.
For proof, see \cite{park2016ross}.
\begin{thm} 
	Let $\lambda\leq\overline{\lambda}.$	
	The diffusion process 
	induced by tuple $(\lambda,M_{\lambda})$
	approaches to infinity as $t\rightarrow\infty$ with probability one.
	For $z$ with $m_{\lambda}\leq z<M_{\lambda},$
	the diffusion process induced by tuple $(\lambda,z)$
	approaches to zero as $t\rightarrow\infty$ with positive probability.
\end{thm}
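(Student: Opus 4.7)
The plan is to reduce the long-time behavior of the induced diffusion to a scale-function analysis, and then to exploit a Wronskian identity between two positive eigenfunctions to convert the extremal property of $M_\lambda$ into sharp control on the boundary limits of that scale function. For a candidate pair $(\lambda,h)$ with $h(\xi)=1$, the induced SDE has drift $k+\sigma^2 h'/h$ and diffusion coefficient $\sigma$, so its scale density is
\begin{equation*}
s_h'(x)\;=\;\exp\!\left(-\int_\xi^x \frac{2(k+\sigma^2 h'/h)(z)}{\sigma^2(z)}\,dz\right)\;=\;\frac{\rho(x)}{h^2(x)},\qquad \rho(x):=\exp\!\left(-\int_\xi^x \frac{2k(z)}{\sigma^2(z)}\,dz\right),
\end{equation*}
and I normalize $s_h(x):=\int_\xi^x \rho(y)/h^2(y)\,dy$ so that $s_h$ is strictly increasing with $s_h(\xi)=0$. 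Differentiating the eigenvalue equation shows that for two positive solutions $h,g$ of $\mathcal{L}\phi=-\lambda\phi$ with $h(\xi)=g(\xi)=1$, the Wronskian $W(h,g)=hg'-h'g$ satisfies $W'=-(2k/\sigma^2)W$, hence $W(h,g)(x)=(g'(\xi)-h'(\xi))\,\rho(x)$. Integrating $(g/h)'=W(h,g)/h^2$ yields the key comparison identity
\begin{equation*}
\frac{g(x)}{h(x)}\;=\;1+(g'(\xi)-h'(\xi))\,s_h(x),
\end{equation*}
which drives both parts of the proof.

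For the first claim, take $h=h_{M_\lambda}$ and, assuming the non-degenerate case $m_\lambda<M_\lambda$, let $g$ be any other positive candidate with $g'(\xi)<M_\lambda$. The identity reads $g/h=1-(M_\lambda-g'(\xi))\,s_h$, and positivity of $g/h$ forces $s_h(x)<1/(M_\lambda-g'(\xi))$ uniformly in $x$, hence $s_h(\infty)<\infty$. Next I argue by contradiction that $s_h(0+)=-\infty$: were $s_h(0+)>-\infty$, then with $s_h(\infty)<\infty$ already established, $s_h$ would be bounded on $(0,\infty)$, making $g/h$ bounded and $\inf_x h/g>0$. Picking any $0<\epsilon<\inf_x h/g$, the function $\hat h:=(h-\epsilon g)/(1-\epsilon)$ would be a strictly positive solution of $\mathcal{L}\hat h=-\lambda\hat h$ with $\hat h(\xi)=1$ and $\hat h'(\xi)=(M_\lambda-\epsilon g'(\xi))/(1-\epsilon)>M_\lambda$, contradicting the maximality of $M_\lambda$. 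Thus $s_h(0+)=-\infty$ and $s_h(\infty)<\infty$, and the standard scale-function dichotomy for one-dimensional diffusions (transience from $s_h(\infty)<\infty$ rules out recurrence, and the divergence of $s_h$ at $0$ excludes the left boundary as a limit) yields $X_t\to\infty$ almost surely under $\mathbb{P}^{h}$.

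For the second claim, take the positive $h$ with $z:=h'(\xi)<M_\lambda$ and put $g:=h_{M_\lambda}$, so that the identity gives $g/h=1+(M_\lambda-z)\,s_h>0$. Sending $x\to 0+$ forces $s_h(0+)\ge -1/(M_\lambda-z)>-\infty$. Together with $s_h(\xi)=0$ and the monotonicity of $s_h$, the classical boundary-classification dichotomy gives the induced diffusion positive $\mathbb{P}^{h}$-probability of converging to $0$: almost sure convergence if $s_h(\infty)=\infty$, and convergence with probability $s_h(\infty)/(s_h(\infty)-s_h(0+))\in(0,1)$ if $s_h(\infty)<\infty$. The main obstacle is the contradiction step in the first claim: making it rigorous requires the perturbation $h-\epsilon g$ to remain strictly positive on all of $(0,\infty)$, which is why the bound $s_h(\infty)<\infty$ must be proven \emph{before} — not after — the contradiction on $s_h(0+)$, and why the argument genuinely requires a comparison partner $g$ with $g'(\xi)<M_\lambda$, i.e.\ the non-degenerate case $m_\lambda<M_\lambda$.
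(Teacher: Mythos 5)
Your core machinery is correct and is the natural (and, as far as one can tell, essentially the same) route as the source the paper defers to for this result \citep{park2016ross}: the scale density of the diffusion induced by $(\lambda,h)$ is indeed $\rho/h^{2}$ with $\rho(x)=e^{-\int_\xi^x 2k/\sigma^2}$, the Wronskian identity $g/h=1+(g'(\xi)-h'(\xi))\,s_h$ is right, the perturbation $\hat h=(h-\epsilon g)/(1-\epsilon)$ does produce a positive normalized solution with slope strictly above $M_\lambda$ and hence the desired contradiction, and the boundary dichotomy for one-dimensional diffusions correctly converts $s_h(0+)=-\infty,\ s_h(\infty)<\infty$ into almost-sure convergence to $\infty$, and $s_h(0+)>-\infty$ into a positive probability of convergence to $0$ (your exit-probability formula is also correct). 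The second claim is complete as written, since $m_\lambda\le z<M_\lambda$ already forces the existence of the comparison partner $h_{M_\lambda}$. (Incidentally, the contradiction step only needs $s_h(0+)>-\infty$ to bound $g/h$ from above, so the ordering you insist on at the end is not actually essential.)

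The genuine gap is the endpoint case $\lambda=\overline{\lambda}$ with $m_{\overline{\lambda}}=M_{\overline{\lambda}}$, which the statement includes but your first claim excludes. Theorem \ref{thm:positive_r} guarantees two linearly independent positive solutions only for $\lambda<\overline{\lambda}$; at $\lambda=\overline{\lambda}$ there may be exactly one (up to scaling), and then no candidate $g$ with $g'(\xi)<M_{\overline{\lambda}}$ exists, so your argument for $s_h(\infty)<\infty$ and for the contradiction cannot even be started. This is not a removable technicality of your method: in the paper's own Black--Scholes example at $\lambda=\overline{\lambda}$ the positive solution is the double-root power $h(s)=s^{\,1/2-(r-\delta)/\sigma^2}$, unique up to scaling, and the induced diffusion is $dS_t=\tfrac{\sigma^2}{2}S_t\,dt+\sigma S_t\,dB_t$, so $\log S_t$ is a driftless Brownian motion and the process is recurrent rather than divergent to infinity (this is the critical case in the sense of \citet{pinsky1995positive}, where the $h$-transform by the ground state is recurrent). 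So your proof establishes the theorem for every $\lambda<\overline{\lambda}$, and for $\lambda=\overline{\lambda}$ whenever two independent positive solutions survive at the endpoint, but the critical endpoint covered by the stated range is left open --- and in that case the asserted almost-sure divergence cannot hold, so closing your gap requires either restricting the first claim to $\lambda<\overline{\lambda}$ (or to non-critical $\overline{\lambda}$) or explicitly flagging the discrepancy with the statement, not merely noting that your argument ``requires'' the non-degenerate case.
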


\noindent In conclusion, it is obtained that
$$\mathcal{A}\subseteq\{(\lambda,M_\lambda)\in\mathcal{C}\,|\,\lambda\leq\overline{\lambda}\,\}\;.$$
From now, we mainly focus on the tuples $(\lambda,M_\lambda)$ with $\lambda\leq\overline{\lambda}.$

\subsection{The martingale condition}
\label{sec:mart_condi}
We now explore the martingale condition  discussed in Section \ref{sec:transformed}. 
For any given tuple $(\lambda,h'(\xi))$, we know the process $e^{\lambda t}\,h(X_t)\,G_t^{-1}$ is a local martingale.
Consider the set of tuples $(\lambda,M_\lambda)$ which induce the martingales $e^{\lambda t}\,h(X_t)\,G_t^{-1},$ that is,
$$\mathcal{M}:=\{\,(\lambda,M_\lambda)\,|\,e^{\lambda t}\,h(X_{t})\,G_t^{-1} \textnormal{ is a martingale}\,\}\;.$$
Clearly, $\mathcal{A}$ is a subset of $\mathcal{M}.$
The following theorem states that the set is connected in $\mathbb{R}^2.$ Refer to \cite{park2016ross} for proof.

\begin{thm} \label{thm:monotone_martingality}
	Let $\delta<\lambda\leq\overline{\lambda}$ and let 
	$(\delta,g)$ and $(\lambda,h)$ be the candidate pairs corresponding to $(\delta,M_\delta)$ and $(\lambda,M_\lambda),$ respectively. If $e^{\delta t}\,g(X_{t})\,G_t^{-1}$ is a martingale, so is $e^{\lambda t}\,h(X_{t})\,G_t^{-1}.$
\end{thm}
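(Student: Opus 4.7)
The plan is to reduce the $\mathbb{Q}$-martingale property of $e^{\lambda t}h(X_t)/G_t$ to a simpler $\mathbb{P}^g$-martingale question via a Doob-type change of measure, and then close the local-to-true martingale gap using the divergence property stated in Section \ref{sec:analy}.

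First, since by hypothesis $(e^{\delta t}g(X_t)/G_t)_{t\ge 0}$ is a $\mathbb{Q}$-martingale, Proposition \ref{prop:dynamics_under_P} applies to $(\delta,g)$ and the measure $\mathbb{P}^g$ is well-defined. Under $\mathbb{P}^g$ the state variable obeys $dX_t=(k+\sigma^2 g'/g)(X_t)\,dt+\sigma(X_t)\,dB_t^g$, and since $g=h_{M_\delta}$, the theorem quoted in Section \ref{sec:analy} (divergence to infinity) gives $X_t\to\infty$ $\mathbb{P}^g$-almost surely.

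Second, set $\psi:=h/g>0$. A direct computation using $\mathcal{L}g=-\delta g$, $\mathcal{L}h=-\lambda h$ and the product-rule identity $\mathcal{L}(gf)=f\mathcal{L}g+g\bigl(\tfrac{1}{2}\sigma^2 f''+(k+\sigma^2 g'/g)f'\bigr)$ yields the Doob $g$-transform relation
\begin{equation*}
\tfrac{1}{2}\sigma^2\psi''+(k+\sigma^2 g'/g)\psi'=(\delta-\lambda)\psi,
\end{equation*}
so $\psi$ is a positive eigenfunction of the $\mathbb{P}^g$-generator with eigenvalue $\delta-\lambda$. Itô's formula then shows that $N_t:=e^{(\lambda-\delta)t}\psi(X_t)$ is a positive local martingale under $\mathbb{P}^g$. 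The factorization
\begin{equation*}
e^{\lambda t}h(X_t)/G_t=\bigl(e^{\delta t}g(X_t)/G_t\bigr)\cdot N_t,
\end{equation*}
combined with the abstract Bayes rule and $\psi(\xi)=1$, shows that the left-hand side is a $\mathbb{Q}$-martingale if and only if $N$ is a $\mathbb{P}^g$-martingale.

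Third, and this is the main obstacle, $N$ is a priori only a positive local martingale, hence a supermartingale, and one needs to upgrade it to a true martingale. The plan is to invoke the classical diffusion-theoretic criterion: a positive local martingale of the form $e^{\mu t}\psi(Y_t)$, where $\psi$ is a positive eigenfunction of the generator of $Y$, is a true martingale iff the Doob $\psi$-transformed diffusion is non-explosive. Applied here, the $\psi$-transform of $(X,\mathbb{P}^g)$ has drift $(k+\sigma^2 g'/g)+\sigma^2\psi'/\psi=k+\sigma^2 h'/h$ (using $h'/h=g'/g+\psi'/\psi$), which is exactly the diffusion induced by the pair $(\lambda,h)$ in the sense of Definition \ref{defi:induced_diffusion}. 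Since $h=h_{M_\lambda}$, the theorem of Section \ref{sec:analy} guarantees this induced diffusion tends to infinity almost surely, and in particular is non-explosive. The criterion then promotes $N$ to a true $\mathbb{P}^g$-martingale, which by the factorization above completes the proof. The delicate point is justifying the criterion rigorously in the present setting; I would either cite it as a standard result in the theory of generalized eigenfunctions of one-dimensional diffusions (see, e.g., \cite{pinsky1995positive}) or prove it directly by a localization along the exit times $\tau_n=\inf\{t:X_t\notin(1/n,n)\}$ together with a Girsanov argument showing that $\mathbb{P}^g(\tau_n>T)\to 1$ under the $\psi$-transformed measure.
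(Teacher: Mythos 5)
Your reduction is fine as far as it goes: the factorization $e^{\lambda t}h(X_t)/G_t=\bigl(e^{\delta t}g(X_t)/G_t\bigr)N_t$ with $N_t=e^{(\lambda-\delta)t}(h/g)(X_t)$, the Bayes argument, and the observation that the $\psi$-transform of the $\mathbb{P}^g$-diffusion is exactly the diffusion induced by $(\lambda,h)$ are all correct; in fact this chain is equivalent to applying directly the criterion quoted in Section \ref{sec:mart_condi} (Pinsky, p.~215), namely that $e^{\lambda t}h(X_t)G_t^{-1}$ is a $\mathbb{Q}$-martingale if and only if the diffusion induced by $(\lambda,h)$ does not explode. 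So after your first four steps the problem is unchanged: you must show non-explosion of the $(\lambda,M_\lambda)$-induced diffusion.

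The final step, which is the whole content of the theorem, has a genuine gap: you infer non-explosion from the divergence-to-infinity theorem (``the induced diffusion tends to infinity almost surely, and in particular is non-explosive''). That implication is false in this setting. Divergence to the right boundary is a statement about which boundary the process exits through (a scale-function statement, valid up to the lifetime); it does not preclude the process reaching $+\infty$ in finite time, which is precisely what explosion means and what the integral criterion \eqref{eqn:criteria} is there to rule out. If your inference were valid, the hypothesis that $e^{\delta t}g(X_t)G_t^{-1}$ is a martingale would never be used (you only invoke it to define $\mathbb{P}^g$, which your argument does not actually need), every tuple $(\lambda,M_\lambda)$ with $\lambda\le\overline{\lambda}$ would lie in $\mathcal{M}$, and the paper's whole apparatus around $\mathcal{M}$, $\lambda_0$ and the case-by-case verification in Section \ref{sec:mart_condi} would be vacuous. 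What is missing is a comparison between the diffusions induced by $(\delta,M_\delta)$ and $(\lambda,M_\lambda)$ that genuinely exploits the hypothesis: e.g.\ using Lemma \ref{lem:monotone} ($g'g^{-1}>h'h^{-1}$, hence $h(y)/h(x)\ge g(y)/g(x)$ for $\xi\le y\le x$) one sees that the second (right-boundary) integral in \eqref{eqn:criteria} for $h$ dominates the one for $g$, so non-explosion at $\infty$ transfers from the $g$-induced to the $h$-induced diffusion; the left-boundary condition must then be handled by a separate argument (the naive comparison runs in the wrong direction there, and one has to use the extremality of the slope $M_\lambda$, e.g.\ that the $(\lambda,M_\lambda)$-induced diffusion exits at $\infty$ so that $0$ is inaccessible for it). The paper itself defers the proof to \cite{park2016ross}, so there is no in-paper argument to match, but as written your step 5 does not close the local-to-true martingale gap.
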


We  assume that sufficiently many candidate pairs satisfy the martingale condition. In other words, the number $\lambda_0$ defined by
$\lambda_0:=\inf\{\,\lambda\,|\,(\lambda,M_\lambda)\in\mathcal{M}\,\}$
is sufficiently small. 
This assumption is to guarantee the existence of admissible pair.
If the set is too small or is empty, there may not exist an admissible pair.

There is an useful criteria to check the martingale condition. Let $(\lambda,h)$ be a candidate pair of $\mathcal{L}h=-\lambda h.$ This pair satisfies the martingale condition if and only if the following two conditions hold:
\begin{equation}\label{eqn:criteria}
\begin{aligned}
&\int_{0}^{\xi}\!dx\,\frac{1}{h^2(x)}e^{-\int_{\xi}^{x}\frac{2k(s)}{\sigma^2(s)}\,ds}\int_{x}^{\xi}\!dy\,\frac{h^2(y)}{\sigma^2(y)}e^{\int_{\xi}^{y}\frac{2k(s)}{\sigma^2(s)}\,ds}=\infty\;,\\
&\int_{\xi}^{\infty}\!dx\,\frac{1}{h^2(x)}e^{-\int_{\xi}^{x}\frac{2k(s)}{\sigma^2(s)}\,ds}\int_{\xi}^{x}\!dy\,\frac{h^2(y)}{\sigma^2(y)}e^{\int_{\xi}^{y}\frac{2k(s)}{\sigma^2(s)}\,ds}=\infty\;.
\end{aligned} 
\end{equation}
We recall   Definition \ref{defi:induced_diffusion}.
The above criteria implies that 
the process $e^{\lambda t}\,h(X_{t})\,G_t^{-1}$ is a martingale if and only if the diffusion process induced by $(\lambda,h)$  does not explode. 
Refer to page 215 in \cite{pinsky1995positive}.
The martingale condition can be checked case-by-case, thus we do not go further details.

\subsection{The usual conditions}
\label{sec:usual}
One of the main contributions of the present article is to investigate Assumption \ref{assume:U}.
Now, in the set $\{(\lambda,M_\lambda)\in\mathcal{C}\,|\,\lambda\leq\overline{\lambda}\,\},$ we explore which  tuples  satisfy
Assumption \ref{assume:U}.
For convenience, put
$$\mathcal{U}:=\{(\lambda,M(\lambda))\in\mathcal{C}\,|\,h_{M(\lambda)} \textnormal{ satiafies the usual conditions} \}\;.$$
Here, $h_{M(\lambda)}$ is the function corresponding to the tuple $(\lambda,M_\lambda).$
The notation $\mathcal{U}$ is inherited from terminology ``usual conditions".
Let $\mathbb{L}$ be the measure   defined by the Radon-Nikodym derivative
$$\left.\frac{d\mathbb{L}}{d\mathbb{Q}}\right|_{\mathcal{F}_t}=\exp{\left(-\frac{1}{2}\int_{0}^t v^2(X_s)\,ds-\int_0^t v(X_s)\,dW_s\right)}\;,$$
which is a martingale by Assumption \ref{assume:G}.
The $\mathbb{L}$-dynamics of $X_t$ is
\begin{equation*} 
\begin{aligned}
dX_{t}&=(b-v\sigma)(X_{t})\,dt+\sigma(X_{t})\,dB_{t}^{1}\\
&=k(X_{t})\,dt+\sigma(X_{t})\,dB_{t}^{1} 
\end{aligned}
\end{equation*}
for a Brownian motion $B_t^1.$ Here, we used notation $B_t^1$  to be consistent with the notation used in Proposition \ref{prop:dynamics_under_P}.

To investigate the set $\mathcal{U},$ we need to employ the notion of boundary classification.
Both boundaries $0$ and $\infty$ of $(X_t)_{t\ge0}$ are inaccessible under $\mathbb{Q}$ if and only if those are inaccessible under $\mathbb{L}.$ It is because two measures $\mathbb{Q}$ and $\mathbb{L}$ are equivalent on each $\mathcal{F}_T,T\ge0.$ 
From Assumption \ref{assume:X}, two boundaries $0$ and $\infty$ are inaccessible  under both measures $\mathbb{Q}$ and $\mathbb{L}.$

From now on, we discuss more detailed boundary classification under the measure $\mathbb{L}.$

\begin{defi} \label{def:boundary_clasify} Let
	\begin{align*}
	\gamma(x)&=e^{-\int_{\xi}^x\frac{2k(s)}{\sigma^2(s)}\,ds}\;,\\
	Q(x)&=\frac{2}{\sigma^2(x)\gamma(x)} \int_{\xi}^x\gamma(s)\,ds\;,\\
	R(x)&=\gamma(x)\int_{\xi}^x\frac{2}{\sigma^2(s)\gamma(s)} \,ds\;.
	\end{align*} 
	An endpoint $0$ is said to be {\em inaccessible} if $R\notin L^1(0,\xi).$ An inaccessible endpoint $0$ is said to be
	\begin{align*}
	\left\{\enspace
	\begin{aligned}
	&\textnormal{entrance if}&&Q\in L^1(0,\xi) \;,\\
	&\textnormal{natural if} &&Q\notin L^1(0,\xi) \;.
	\end{aligned}
	\right.
	\end{align*} 
	The definitions of inaccessible, entrance and natural at the endpoint $\infty$ are defined in similar ways. 
\end{defi}

We now state main theorems of the paper, which describes the usual set $\mathcal{U}.$
The following theorem implies that the set $\mathcal{U}$ is a connected subset of $\mathbb{R}^2.$ Define 
$\lambda_1:=\sup\{\,\lambda\,|\,(\lambda,M_\lambda)\in\mathcal{U}\,\},$ then   for all $\lambda<\lambda_1,$ the tuple 
$(\lambda,M_\lambda)$ is in the usual set $\mathcal{U}.$
The endpoint   $(\overline{\lambda},M_{\overline{\lambda}})$
may or may not be in $\mathcal{U}.$

\begin{thm} \label{thm:monotone_usual}
	Assume $\delta<\lambda\leq\overline{\lambda}.$
	Let $g$ and $h$ be the functions corresponding to tuple $(\delta,M_\delta)$ and $(\lambda,M_\lambda),$ respectively.  If $h$ satisfies the usual conditions, then so does $g.$ In other words, if $(\lambda,M_\lambda)$ is in $\mathcal{U},$ then   $(\delta,M_\delta)$ is also in $\mathcal{U}.$
\end{thm}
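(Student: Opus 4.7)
The natural strategy is to analyze the lower-eigenvalue solution $g$ through its ratio $u := g/h$ with the higher-eigenvalue solution. A direct calculation from $\mathcal{L}g = -\delta g$ and $\mathcal{L}h = -\lambda h$ shows that $u$ is positive, satisfies $u(\xi) = 1$, and obeys
\begin{equation*}
\tfrac{1}{2}\sigma^2(x)\, u''(x) + \Bigl(k(x) + \sigma^2(x)\tfrac{h'(x)}{h(x)}\Bigr) u'(x) \;=\; (\lambda - \delta)\, u(x).
\end{equation*}
The coefficient of $u'$ is precisely the drift of the diffusion induced by $(\lambda, h)$ in the sense of Definition \ref{defi:induced_diffusion}, so the equation has no zero-order term, and $u$ is a positive eigenfunction of the $h$-transformed generator with eigenvalue $\lambda - \delta > 0$. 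Putting the equation in Sturm--Liouville form, $(p\, u')' = q\, u$ with $p = h^2/\gamma$ and $q = (2(\lambda-\delta)/\sigma^2)\,p > 0$, which shows that $p\, u'$ is strictly increasing on $(0,\infty)$.

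\textbf{Conditions (i) and (ii).} Condition (i) is automatic since $g$ is a candidate pair. For (ii), write $g' = u' h + u h'$; since $h, h' > 0$ from the usual conditions on $h$, it suffices to establish $u' \ge 0$. At $\xi$ one has $u'(\xi) = M_\delta - M_\lambda$, which is nonnegative by monotonicity of $\lambda \mapsto M_\lambda$, a consequence of the segment structure of $\mathcal{C}$ in Proposition \ref{prop:slice} together with a standard Riccati comparison for $h'/h$. Combined with the fact that $p\, u'$ is increasing, this gives $u' \ge 0$ for $x \ge \xi$. For $x < \xi$ one argues by contradiction from the extremality of $g$: any sign change of $u'$ on $(0,\xi)$ would allow the construction of a positive solution at level $\delta$ whose initial derivative at $\xi$ strictly exceeds $M_\delta$.

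\textbf{Conditions (iii) and (iv).} For (iv), Assumption \ref{assume:trans_X} forces the $h$-transformed diffusion to be attracted to $\infty$. A positive eigenfunction $u$ with strictly positive eigenvalue for this transient diffusion cannot decay to $0$ at the attracting boundary: otherwise, using that $e^{-(\lambda-\delta)t} u(X_t)$ is a local martingale under $\mathbb{P}^h$ with value $1$ at time $0$, passage to the limit (after a localization/uniform integrability check) gives $1 = 0$, a contradiction. Hence $u$ is bounded below by a positive constant near $\infty$, and $g = uh \to \infty$ follows from $h(x)\to\infty$. For (iii), the asymptotics of $g$ and $h$ near $0$ are analyzed via the Feller boundary classification of $X$ under $\mathbb{L}$ (Definition \ref{def:boundary_clasify}). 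Since both $g$ and $h$ are the maximal positive solutions at their respective eigenvalues, they share the same leading-order behavior at an inaccessible left boundary; hence $u = g/h$ remains bounded as $x\to 0+$, yielding $g = uh \to 0$.

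\textbf{Main obstacle.} The delicate pieces are the monotonicity of $\lambda \mapsto M_\lambda$ underlying Step~1 and, most crucially, the left-boundary analysis in Step~3. Two positive solutions of the same Sturm--Liouville equation may have very different asymptotics at a natural boundary, so one must exploit the specific maximal-solution characterization together with the Feller classification to guarantee that the ratio $u$ stays bounded at $0$; this is where the bulk of the technical work lives.
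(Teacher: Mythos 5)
Your overall strategy (compare $g$ and $h$ through the ratio $u=g/h$, show $u$ is increasing, and read off the usual conditions) is the same comparison idea the paper uses, but the paper gets everything from one cited fact, Lemma \ref{lem:monotone} ($g'g^{-1}>h'h^{-1}$ on all of $(0,\infty)$, i.e.\ $u'>0$ everywhere), and then simply integrates: $g>h$ on $(\xi,\infty)$ gives condition (iv), $g<h$ on $(0,\xi)$ gives (iii), and $g'>0$ is immediate. Your proposal never actually establishes this key inequality. At $\xi$ you assert $M_\delta\geq M_\lambda$ via an unproven ``standard Riccati comparison'' (this is exactly the content of the cited lemma evaluated at $\xi$), and the increasing function $pu'$ only propagates $u'\geq 0$ to the right of $\xi$. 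On $(0,\xi)$ your argument is the one-sentence claim that a sign change of $u'$ ``would allow the construction of a positive solution at level $\delta$ whose initial derivative at $\xi$ strictly exceeds $M_\delta$''; no such construction is given, and it is not clear how it would go --- note that $pu'$ increasing is perfectly consistent with $u'<0$ on an interval $(0,x_1)\subset(0,\xi)$, so something substantive is needed here. This is the heart of the theorem, and it is missing.

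The boundary steps also have problems. For (iii) you argue that $g$ and $h$, being maximal positive solutions at their eigenvalues, ``share the same leading-order behavior'' at $0$; this is false in general --- in the paper's own Black--Scholes example the maximal solutions are $s^{p(\lambda)}$ with $p(\lambda)$ strictly decreasing in $\lambda$, so the leading orders differ (the ratio $u$ is bounded near $0$ there only because it tends to $0$, i.e.\ because $u$ is increasing, which is again the unproved inequality). The correct route is simply $u\leq u(\xi)=1$ on $(0,\xi)$, i.e.\ $g\leq h\to 0$. For (iv) the local-martingale contradiction is not rigorous as stated: $e^{-(\lambda-\delta)t}u(X_t)$ is a positive supermartingale, and a positive supermartingale may converge to $0$ almost surely without any contradiction, so the ``$1=0$'' conclusion needs a uniform-integrability argument you do not supply; moreover it is unnecessary, since $u'\geq 0$ on $[\xi,\infty)$ and $u(\xi)=1$ already give $g\geq h\to\infty$. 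In short: the machinery ($h$-transform, Sturm--Liouville form, $pu'$ increasing) is set up correctly, but the theorem reduces to the global inequality $g'/g>h'/h$, and your proposal assumes it where it matters rather than proving it, while the auxiliary boundary arguments offered in its place are either unjustified or false.
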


\begin{thm}\label{thm:usual_condi_iff_condi}
	Assume $r(\cdot)\geq 0$ and  $r(\cdot)$ is bounded on $(0,\xi).$ Then the set $\,\mathcal{U}$ is nonempty if and only if  $0$ is a natural boundary.
	In this case, for $$\lambda<\overline{r}:=\inf_{x>0} r(x)\,,$$  the tuple $(\lambda,M_\lambda)$ is in the set $\,\mathcal{U}.$
\end{thm}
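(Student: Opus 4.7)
The plan is to rewrite $\mathcal{L}h=-\lambda h$ in the $\mathbb{L}$-generator form $\mathcal{L}_{\mathbb{L}}h=(r-\lambda)h$, where $\mathcal{L}_{\mathbb{L}}f=\tfrac{1}{2}\sigma^{2}f''+kf'$ is the generator of the $\mathbb{L}$-diffusion from Section \ref{sec:usual}, to exploit the fact that $\lambda<\overline{r}$ forces the potential $r-\lambda\geq\overline{r}-\lambda>0$ to be uniformly positive on $(0,\infty)$, and to identify $h_{M(\lambda)}$ with the recessive (``minimal at $0$'') positive solution of this Schr\"odinger-type equation. Under this identification, all four usual conditions reduce to a single boundary statement at $0$, which by Sturm--Liouville theory matches the Feller classification of $0$ for the $\mathbb{L}$-diffusion.

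The key lemma I would invoke, drawn from the theory of positive solutions of second-order elliptic operators with strictly positive potential (cf.\ \cite{pinsky1995positive}), is: under $r\geq 0$, $r$ bounded on $(0,\xi)$, and $\lambda<\overline{r}$, the recessive positive solution at $0$ --- which coincides with $h_{M(\lambda)}$ via the cone decomposition underlying Proposition \ref{prop:slice} --- satisfies $h_{M(\lambda)}(0+)=0$ iff $0$ is natural, and $0<h_{M(\lambda)}(0+)<\infty$ iff $0$ is entrance. I would establish this by representing $h_{M(\lambda)}$ on each interval $(1/n,\xi)$ as the Feynman--Kac functional
\begin{equation*}
h_{n}(x)=\mathbb{E}^{\mathbb{L}}_{x}\Bigl[\exp\Bigl(-\int_{0}^{T_{\xi}}(r-\lambda)(X_{s})\,ds\Bigr)\mathbf{1}_{\{T_{\xi}<T_{1/n}\}}\Bigr],
\end{equation*}
where $T_{y}$ is the $\mathbb{L}$-hitting time of $y$, showing $h_{n}\uparrow h_{M(\lambda)}$, and reading off the limit as $x\to 0+$ from the integrability of $Q$ near $0$ in Definition \ref{def:boundary_clasify}; the boundedness of $r$ on $(0,\xi)$ keeps the exponential factor controlled so the monotone limit is well-posed.

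Given this lemma, the sufficient direction is short. Fix $\lambda<\overline{r}$ and set $h=h_{M(\lambda)}$. Condition (i) is automatic, and (iii) is the lemma. For (ii), $(r-\lambda)h>0$ forces $\tfrac{1}{2}\sigma^{2}h''(x_{0})>0$ at every critical point $x_{0}$, so every critical point is a strict local minimum and $h$ has at most one such point; combined with $h(0+)=0$ and $h>0$ on $(0,\infty)$, $h$ cannot decrease near $0$ without turning negative, which rules out any interior critical point and yields $h'>0$. For (iv), if $h(\infty)=L<\infty$ then $kh'$ is small at infinity while $(r-\lambda)h\geq(\overline{r}-\lambda)L>0$, so the ODE $\tfrac{1}{2}\sigma^{2}h''=(r-\lambda)h-kh'$ keeps $h''$ bounded below by a positive multiple of $\sigma^{-2}$ near infinity, producing divergent $h'$ and contradicting boundedness. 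For the necessary direction, if some $(\lambda^{*},M_{\lambda^{*}})\in\mathcal{U}$, choose $\delta<\min(\lambda^{*},\overline{r})$; by Theorem \ref{thm:monotone_usual} the tuple $(\delta,M_{\delta})$ also lies in $\mathcal{U}$, so $h_{M(\delta)}(0+)=0$, and the contrapositive of the key lemma forces $0$ to be natural.

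The principal obstacle is the rigorous justification of the Sturm--Liouville lemma itself: the Feller classification in Definition \ref{def:boundary_clasify} is attached to the $\mathbb{L}$-diffusion \emph{without} potential, whereas the dichotomy we need concerns the recessive solution of a strictly perturbed operator, and the bridge through the Feynman--Kac representation combined with the $Q$-integrability criterion must be argued carefully. The strict positivity of $r-\lambda$ (which is exactly what $\lambda<\overline{r}$ gives) is essential here, since with $r-\lambda\equiv 0$ the two linearly independent positive solutions are the constants and the scale function, both bounded near $0$, so the sharp dichotomy collapses.
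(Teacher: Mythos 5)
Your treatment of the boundary at $0$ (the Feynman--Kac / recessive-solution dichotomy, then deducing $h'>0$ from $h(0+)=0$ via the no-positive-local-maximum observation, and the reduction of necessity to $\lambda<\overline{r}$ through Theorem \ref{thm:monotone_usual}) is a viable, genuinely probabilistic alternative to the paper's integration arguments, although you have deferred the crux: the claimed lemma, including the identification $h_n\uparrow h_{M(\lambda)}$ and the transfer of the $Q$-criterion of Definition \ref{def:boundary_clasify} from the potential-free $\mathbb{L}$-diffusion to the operator with potential $r-\lambda$, is essentially the hard half of the theorem and is only sketched. The paper instead works directly with the identity $(h'/\gamma)'=2(r-\lambda)h/(\sigma^2\gamma)$, integrates it, and compares with $Q$ and $R$; it also proves $h'>0$ unconditionally (Lemma \ref{lem:slope}) rather than as a consequence of $h(0+)=0$.

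The genuine gap is your argument for the usual condition (iv), $\lim_{x\rightarrow\infty}h(x)=\infty$. The assertion that ``$kh'$ is small at infinity'' does not follow from $h$ being bounded and increasing: that only gives $h'\in L^1(\xi,\infty)$, not $h'(x)\rightarrow0$, and $k$ is allowed to be unbounded, so the term $kh'$ cannot be discarded. Moreover, even granting a bound of the form $h''(x)\geq c\,\sigma^{-2}(x)$ near infinity, no divergence of $h'$ follows when $\sigma^{-2}$ is integrable at infinity (take, e.g., $\sigma(x)\sim x^{2}$, $k\equiv0$, $r$ constant: the equation $\tfrac{1}{2}x^{4}h''=(r-\lambda)h$ has solutions converging to a finite positive limit, so your pointwise estimate proves nothing). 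The true obstruction to boundedness of $h_{M(\lambda)}$ is the inaccessibility of $\infty$ guaranteed by Assumption \ref{assume:X}, which your proof never invokes. The paper's route is the correct one: from equation \eqref{eqn:b}, $h'(x)=\gamma(x)\bigl(h'(\xi)+\int_{\xi}^{x}\frac{2(r(s)-\lambda)h(s)}{\sigma^{2}(s)\gamma(s)}\,ds\bigr)$ with both terms positive, so boundedness of $h$ forces $\gamma(x)\int_{\xi}^{x}\frac{2h(s)}{\sigma^{2}(s)\gamma(s)}\,ds\in L^{1}(\xi,\infty)$; since $h\geq h(\xi)=1$ is increasing this dominates $R$, giving $R\in L^{1}(\xi,\infty)$ and contradicting that $\infty$ is inaccessible. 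You need to replace your step (iv) by an argument of this type (or otherwise bring non-explosiveness at $\infty$ into play); as written, that step fails.
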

\noindent The above theorem states a sufficient and necessary condition for the existence of   tuples  which   satisfies
the usual conditions. 
Refer to Appendix \ref{app:pf_thm_II} and \ref{app:pf_thm_I}  for proofs of Theorem  \ref{thm:monotone_usual} and \ref{thm:usual_condi_iff_condi}, respectively.

For the remainder of this section, we find the usual set $\mathcal{U}$ when the short interest rate function $r(\cdot)$ is a constant $r.$
By Theorem \ref{thm:usual_condi_iff_condi}, $\mathcal{U}$ is nonempty if and only if $0$ is a natural boundary, thus we assume $0$ is a natural boundary. For $\lambda<r,$ the tuple $(\lambda,M_\lambda)$ is always in $\mathcal{U}.$

From equation \eqref{eqn:lower_bound}, we know that $\overline{\lambda}\geq r.$
The case of $\overline{\lambda}=r$ is relatively easy to find the set $\mathcal{U}.$ Since
$$\{(\lambda,M_\lambda)\,|\,\lambda<r\}\subseteq\mathcal{U}\subseteq\{(\lambda,M_\lambda)\,|\,\lambda\leq r\}\;,$$
The set  $\mathcal{U}$ is determined by the solution corresponding to the tuple $(r,M_r).$
Consider the solution of the corresponding second-order differential equation
\begin{equation*} 
\frac{1}{2}\sigma^2(x)h''(x)+k(x)h'(x)=0\;. 
\end{equation*}
By direct calculation, two linearly independent solutions are
$$h_1(x)=1+c\int_{\xi}^{x}e^{-\int_{\xi}^{y}\frac{2k(s)}{\sigma^2(s)}\,ds}\,dy\;,\;\;h_2(x)=1\;.$$
Clearly, $h_2(x)=1$ does not satisfy the usual conditions.
By considering the function $h_1(x),$ we have the following proposition. Recall that  $$\gamma(x):=e^{-\int_{\xi}^{x}\frac{2k(s)}{\sigma^2(s)}\,ds}\;.$$
\begin{prop} \label{prop:U_when_overline_lambda_r}
	Assume that $\overline{\lambda}=r$ and $0$ is a natural boundary.
	If $ \int_{\xi}^{\infty}\gamma(x)\,dx=\infty$ and $ \int_{0}^{\xi}\gamma(x)\,dx<\infty,$ then 
	$\mathcal{U}=\{(\lambda,M_\lambda)\,|\,\lambda\leq r\}.$
	Otherwise, $\mathcal{U}=\{(\lambda,M_\lambda)\,|\,\lambda<r\}.$
\end{prop}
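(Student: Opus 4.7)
The plan is to exploit the sandwich $\{(\lambda,M_\lambda)\,|\,\lambda<r\}\subseteq\mathcal{U}\subseteq\{(\lambda,M_\lambda)\,|\,\lambda\le r\}$ already established in the text, so that everything reduces to deciding whether the single tuple $(r,M_r)$ lies in $\mathcal{U}$. At $\lambda=r$ the equation $\mathcal{L}h=-rh$ collapses to $\tfrac12\sigma^2h''+kh'=0$, whose general solution satisfying $h(\xi)=1$ is the one-parameter family
\begin{equation*}
h_b(x)=1+b\int_\xi^x\gamma(y)\,dy,\qquad b\in\mathbb{R},
\end{equation*}
and since $\gamma(\xi)=1$ we have $h_b'(\xi)=b$. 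Thus $M_r$ is the supremum of those $b$ for which $h_b>0$ on $(0,\infty)$, and $h_{M(r)}=h_{M_r}$ is the associated solution.

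Next I would compute $M_r$ by separating the positivity constraint into its behavior at the two ends. For $b>0$ and $x<\xi$, writing $\int_\xi^x\gamma=-\int_x^\xi\gamma$, the requirement $h_b(x)>0$ becomes $b\int_x^\xi\gamma(y)\,dy<1$; letting $x\to 0^+$ this forces $b\int_0^\xi\gamma(y)\,dy\le1$. For $b<0$ and $x>\xi$, positivity becomes $|b|\int_\xi^x\gamma(y)\,dy<1$, i.e. $|b|\int_\xi^\infty\gamma(y)\,dy\le1$. Combining these, if $\int_0^\xi\gamma(y)\,dy=\infty$ then no $b>0$ is admissible, so $M_r=0$ and $h_{M_r}\equiv1$; whereas if $\int_0^\xi\gamma(y)\,dy<\infty$ then
\begin{equation*}
M_r=\Bigl(\int_0^\xi\gamma(y)\,dy\Bigr)^{-1},\qquad h_{M_r}(x)=\frac{\int_0^x\gamma(y)\,dy}{\int_0^\xi\gamma(y)\,dy}.
\end{equation*}
One small thing to verify here is that the extremal $b$ actually produces a solution which is \emph{strictly} positive on the open interval $(0,\infty)$—the function vanishes only at the (excluded) boundary point, so this is fine and Proposition \ref{prop:slice} places $(r,M_r)$ in $\mathcal{C}$.

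Finally I would check the four usual conditions on $h_{M_r}$. When $M_r=0$, the constant function $h_{M_r}\equiv1$ obviously violates (ii), (iii), (iv), so $(r,M_r)\notin\mathcal{U}$; combined with the lower inclusion this gives $\mathcal{U}=\{(\lambda,M_\lambda)\,|\,\lambda<r\}$, matching the ``otherwise'' branch. When $M_r>0$, the explicit formula above makes (i), (ii), and (iii) immediate (note $h_{M_r}'=\gamma/\int_0^\xi\gamma>0$ and $h_{M_r}(0^+)=0$), while condition (iv) reads
\begin{equation*}
\lim_{x\to\infty}h_{M_r}(x)=\frac{\int_0^\infty\gamma(y)\,dy}{\int_0^\xi\gamma(y)\,dy}=\infty
\end{equation*}
and this holds precisely when $\int_\xi^\infty\gamma(y)\,dy=\infty$. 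Thus $(r,M_r)\in\mathcal{U}$ exactly in the case $\int_0^\xi\gamma<\infty$ and $\int_\xi^\infty\gamma=\infty$, giving $\mathcal{U}=\{(\lambda,M_\lambda)\,|\,\lambda\le r\}$; in the remaining subcase of $\int_0^\xi\gamma<\infty$ the limit is finite and (iv) fails. The only delicate point is the bookkeeping at the extremal $b$—verifying that the supremum is attained with strict positivity on the open domain—but this is straightforward from the explicit form, so there is no real obstacle beyond the case split itself.
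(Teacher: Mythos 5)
Your proposal is correct and follows essentially the same route as the paper: reduce via the sandwich to the single tuple $(r,M_r)$, use the explicit solutions $h_b(x)=1+b\int_\xi^x\gamma$, identify $M_r$ (equal to $\bigl(\int_0^\xi\gamma\bigr)^{-1}$ when that integral is finite, and $0$ otherwise), and check the usual conditions, with (iv) holding exactly when $\int_\xi^\infty\gamma=\infty$. The only difference is that you spell out the case analysis for the ``otherwise'' branch, which the paper dismisses as ``the converse is trivial.''
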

\begin{proof} Assume
	$ \int_{\xi}^{\infty}\gamma(x)\,dx=\infty$ and $ \int_{0}^{\xi}\gamma(x)\,dx<\infty.$ Then   $h_1(x)$ with 
	$$c=\frac{1}{\int_{0}^{\xi}\gamma(x)\,dx}$$ is the function corresponding to $(r,M_r).$ Clearly $\lim_{x\rightarrow0+}h_1(x)=0$ with this choice of $c.$ Thus, this tuple is in $\mathcal{U}.$	 The converse is trivial.
\end{proof}
\noindent It is noteworthy that the conditions $ \int_{\xi}^{\infty}\gamma(x)\,dx=\infty$ and $ \int_{0}^{\xi}\gamma(x)\,dx<\infty$ means that the diffusion process $X_t$ under $\mathbb{L}$
has the following property:
$$\mathbb{L}\Big(\lim_{t\rightarrow\infty}X_t=0\Big)=\mathbb{L}\Big(\sup_{0\leq t<\infty}X_t<\infty\Big)=1\;.$$
Refer to page 345 in \cite{karatzas2012brownian}.

We now consider the case of $\overline{\lambda}>r.$ Refer to Appendix \ref{app:pf_thm_const_r} for proof of the following theorem.

\begin{thm}\label{thm:usual_condi_const_r}
	Assume that $\overline{\lambda}>r$ and $0$ is a natural boundary. If $\int_{\xi}^{\infty}\gamma(x)\,dx=\infty,$ then 
	$$\mathcal{U}=\{(\lambda,M_\lambda)\,|\,\lambda\leq \overline{\lambda}\,\}\;\;\text{ or }\;\;\{(\lambda,M_\lambda)\,|\,\lambda< \overline{\lambda}\,\}\;.$$
	Moreover, if $\infty$ is a natural boundary, then  $\mathcal{U}=\{(\lambda,M_\lambda)\,|\,\lambda\leq \overline{\lambda}\,\}.$
	If  $\int_{\xi}^{\infty}\gamma(x)\,dx<\infty,$ then 
	$\mathcal{U}=\{(\lambda,M_\lambda)\,|\,\lambda< r\,\}.$ \end{thm}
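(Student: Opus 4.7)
The plan is to combine the monotonicity given by Theorem \ref{thm:monotone_usual} with direct ODE analysis of $h_{M_\lambda}$ to determine $\lambda_1 := \sup\{\lambda : (\lambda,M_\lambda) \in \mathcal{U}\}$ and whether the supremum is attained. By Theorem \ref{thm:monotone_usual}, $\mathcal{U}$ is downward-closed, hence equal to $\{\lambda \le \lambda_1\}$ or $\{\lambda < \lambda_1\}$ (suppressing the second coordinate). Since $r$ is constant, $\overline{r} = r$, and Theorem \ref{thm:usual_condi_iff_condi} already gives $\lambda_1 \ge r$. Throughout, I would use the characterization of $h_{M_\lambda}$ as (the normalization of) the minimal positive solution of $\mathcal{L}h = -\lambda h$ at $0$, standard in the criticality theory of \cite{pinsky1995positive}: because $0$ is natural, this forces $\lim_{x\to 0+} h_{M_\lambda}(x) = 0$, so (iii) of Assumption \ref{assume:U} is automatic; (i) is built into the definition of candidate pair; and (ii) $h' > 0$ follows by integrating the first-order form $(\gamma^{-1} h')' = \frac{2(r-\lambda)}{\sigma^2 \gamma} h$ derived from the ODE, together with the fact that $h_{M_\lambda}$ rises from $0$ at $0$ to $1$ at $\xi$. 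Thus only (iv), $h_{M_\lambda}(\infty) = \infty$, is substantive, and the case split is driven entirely by this condition.

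\emph{Case $\int_\xi^\infty \gamma(x)\,dx = \infty$.} Here I aim to show $\lambda_1 = \overline{\lambda}$. For every $\lambda < \overline{\lambda}$ the idea is to compare $h_{M_\lambda}$ with the explicit $\lambda = r$ eigenfunction $\tilde h(x) = 1 + \int_\xi^x \gamma(y)\,dy$, which diverges by hypothesis. For $\lambda \le r$, a maximum-principle argument applied to $h_{M_\lambda}/\tilde h$, using the favourable sign of $(r-\lambda)$, yields divergence of $h_{M_\lambda}$. For $r < \lambda < \overline{\lambda}$, I would instead argue by contradiction: a bounded positive eigenfunction at $\lambda$ could be perturbed into a positive supersolution of $\mathcal{L} u = -(\lambda+\varepsilon) u$ for small $\varepsilon > 0$, which by standard sub/supersolution theory produces a positive solution there, contradicting the maximality of $\overline{\lambda}$ once $\lambda + \varepsilon > \overline{\lambda}$. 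At the endpoint $\lambda = \overline{\lambda}$ a compactness argument along $\lambda \uparrow \overline{\lambda}$ transports the growth property to $h_{M_{\overline{\lambda}}}$; when $\infty$ is also natural, the positive-solution space is one-dimensional at $\overline{\lambda}$, so the limit is unique and inherits unboundedness, giving $\mathcal{U} = \{\lambda \le \overline{\lambda}\}$, whereas if $\infty$ is entrance the limit may stay bounded, accounting for the alternative $\mathcal{U} = \{\lambda < \overline{\lambda}\}$.

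\emph{Case $\int_\xi^\infty \gamma(x)\,dx < \infty$.} By Theorem \ref{thm:monotone_usual} it suffices to show $(r, M_r) \notin \mathcal{U}$; combined with Theorem \ref{thm:usual_condi_iff_condi} this forces $\mathcal{U} = \{\lambda < r\}$. At $\lambda = r$ the positive solutions of $\mathcal{L} h = -r h$ are explicitly $h(x) = 1 + c\int_\xi^x \gamma(y)\,dy$, and inspecting the admissible range of $c$ (splitting on whether $\int_0^\xi \gamma$ is finite or infinite) shows that the extremal choice $c = M_r$ yields either a function with a finite limit at $\infty$ (violating (iv)) or the constant $h_{M_r} \equiv 1$ (violating both (iii) and (iv)).

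The main obstacle is (iv) in the range $r < \lambda \le \overline{\lambda}$ of the first case. The integrated identity $h'/\gamma = h'(\xi) + \int_\xi^{\,\cdot} \frac{2(r-\lambda)h}{\sigma^2 \gamma}\,dy$ is monotonically decreasing in $x$ when $\lambda > r$, so direct integration cannot exclude the borderline behaviour $h'/\gamma \to 0$ with $h$ bounded; one must instead exploit the perturbation/maximality characterization of $\overline{\lambda}$ from \cite{pinsky1995positive}, together with the degeneration of the positive-solution space from two- to one-dimensional at $\overline{\lambda}$, to rule out such bounded solutions and to distinguish the natural-$\infty$ from the entrance-$\infty$ subcase.
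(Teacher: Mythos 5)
Your overall skeleton (downward closure of $\mathcal{U}$ via Theorem \ref{thm:monotone_usual}, reduction of the case $\int_\xi^\infty\gamma\,dx<\infty$ to showing $(r,M_r)\notin\mathcal{U}$ and inspecting the explicit solutions $1+c\int_\xi^x\gamma(y)\,dy$) matches the paper, and that second case is fine. The genuine gap is exactly the step you flag as the obstacle: unboundedness of $h_{M(\lambda)}$ for $r<\lambda<\overline{\lambda}$ when $\int_\xi^\infty\gamma\,dx=\infty$. Your mechanism --- perturb a bounded positive eigenfunction at $\lambda$ into a positive supersolution at $\lambda+\varepsilon$ and contradict maximality of $\overline{\lambda}$ --- fails as stated: for $\lambda<\overline{\lambda}$ and small $\varepsilon$ one still has $\lambda+\varepsilon\le\overline{\lambda}$, so producing a positive solution there contradicts nothing; and positivity plus boundedness alone can never yield a contradiction, since with constant $r$ the function $h\equiv1$ is a bounded positive eigenfunction at $\lambda=r<\overline{\lambda}$. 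Any correct argument must exploit the extremal property of $M_\lambda$, and the paper does so probabilistically (Lemma \ref{lemma:infty_condition}): using the transformed measure attached to $(\overline{\lambda},M_{\overline{\lambda}})$ it writes $h(\xi)=\mathbb{E}^{\mathbb{P}}[(g^{-1}h)(X_t)]\,e^{-(\overline{\lambda}-\lambda)t}g(\xi)$, bounds $g^{-1}h$ via the comparison $h<g$ on $(0,\xi)$, $h>g$ on $(\xi,\infty)$ from Lemma \ref{lem:monotone} with $g$ increasing, and lets $t\to\infty$; note this also invokes the martingale condition through Theorem \ref{thm:monotone_martingality}, an ingredient entirely absent from your sketch. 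Likewise, at the endpoint $\lambda=\overline{\lambda}$ with $\infty$ natural, ``compactness along $\lambda\uparrow\overline{\lambda}$ plus one-dimensionality'' does not transport unboundedness through the limit (limits of unbounded increasing functions can be bounded --- that is precisely the entrance scenario); the paper instead proves $\overline{h}{}'(x)\ge 2(\overline{\lambda}-r)\gamma(x)\int_x^\infty \overline{h}(s)/(\sigma^2(s)\gamma(s))\,ds$ and combines Fubini with $Q\notin L^1$ near $\infty$ to force $\overline{h}(\infty)=\infty$.

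A second, smaller but real, error: you call condition (iii), $\lim_{x\to0+}h_{M(\lambda)}(x)=0$, ``automatic'' from $0$ being natural and $h_{M(\lambda)}$ being minimal near $0$. That principle is false: in the Black--Scholes model with $2(r-\delta)>\sigma^2$ the boundary $0$ is natural, yet for $r<\lambda\le\overline{\lambda}$ the extremal solution is $s^{M_\lambda}$ with $M_\lambda<0$, so $h_{M(\lambda)}(0+)=\infty$. In the regime where you need it ($\int_\xi^\infty\gamma\,dx=\infty$) the statement is true but must be proved; the paper gets monotonicity $h'>0$ from Theorem \ref{thm:iff_condi_increasing} (your own derivation of (ii) presupposes (iii), so it is circular as written) and then the limit $0$ from the supermartingale bound $e^{(\lambda-r)t}h(0)\le h(\xi)$ of Lemma \ref{lemma:0_condition}. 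Without these two ingredients and a valid replacement for Lemma \ref{lemma:infty_condition}, the first case --- the heart of the theorem --- is not established by your proposal.
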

\noindent The authors conjecture that when $\int_{\xi}^{\infty}\gamma(x)\,dx=\infty,$
the right boundary $\infty$ is a natural boundary if and only if  $\mathcal{U}=\{(\lambda,M_\lambda)\,|\,\lambda\leq \overline{\lambda}\,\}.$

\subsection{Admissible sets}
\label{sec:admissible}
The purpose of this paper is to find the admissible set $\mathcal{A}$ under Assumption \ref{assume:X} - \ref{assume:U}.
The admissible set satisfies
$$\mathcal{A}=\mathcal{M}\cap\mathcal{U}\;, $$
thus $\mathcal{A}$ is a connected subset because  $\mathcal{M}$ and $\mathcal{U}$ are connected  
subsets of $  \{(\lambda,M_\lambda)\in\mathcal{C}\,|\,\lambda\leq\overline{\lambda}\,\}.$ Define $$\lambda_0:=\inf\{\,\lambda\,|\,(\lambda,M_\lambda)\in\mathcal{M}\,\}\;,\;\lambda_1:=\sup\{\,\lambda\,|\,(\lambda,M_\lambda)\in\mathcal{U}\,\}\;.$$
The endpoints $\lambda_0$ and $\lambda_1$ may or may not be in $\mathcal{M}$ and $\mathcal{U},$ respectively. Assuming $\lambda_0\leq\lambda_1,$ we obtain that
for $\lambda$ between $\lambda_0$ and $\lambda_1,$  the tuple
$(\lambda,M_\lambda)$ is an admissible tuple.
In conclusion, one can recover the objective measures by the one-parameter family.

\section{The recovery theorem}
\label{sec:recovery_thm}
We investigate how the previous results can be used for Ross recovery. 
In the continuous-time consumption-based CAPM, the state variable $X_t$ is the aggregate consumption (or income) process of the market.  
In a financial market, the aggregate income is equal to the 
aggregate dividend, thus we may assume that $X_t$ is the aggregate dividend. 
Let $S_t$ be a composite stock price index such as S\&P 500 and assume that $S_t$ pays aggregate dividend which is a function of $S_t,$ that is,  
$$X_t=\delta(S_t)S_t$$
where $\delta(S_t)$ is the dividend per one unit of the composite stock price index. 
The function $\delta(s)$ is assumed to be known ex ante and is a nondecreasing function of $s.$
Assume that the function $\pi(s):=\delta(s)s$ is continuously twice differentiable with 
continuously twice differentiable inverse.
By Appendix \ref{sec:inv_prop}, the transformed measure is invariant under the map $\pi,$ thus 
we may assume that the state variable is $S_t.$
\cite{ross2015recovery}  also used the dividend or 
the composite stock price index (S\&P 500) 
in page 630-633 as the state variable.

Let the numeraire $G_t$ be the wealth process induced the composite stock price process $S_t,$ that is, $G_t=e^{\int_0^t\delta(S_u)\,du}S_t.$ 
Assume that the state variable $S_t$ satisfies
$$dS_t=(r(S_t)-\delta(S_t)+\sigma^2(S_t))S_t\,dt+\sigma(S_t)S_t\,dW_t\;.$$
Then 
$$\frac{dG_t}{G_t}=(r(S_t)+\sigma^2(S_t))\,dt+\sigma(S_t)\,dW_t\;.$$ 
The operator $\mathcal{L}$ corresponding to equation \eqref{eqn:L} is
\begin{equation}\label{eqn:2nd_ode}
\mathcal{L}h(s)=\frac{1}{2}\sigma^2(s)s^2h''(s)+(r(s)-\delta(s))sh'(s)-r(s)h(s)\;. 
\end{equation}

Occasionally, $Y_t:=\log S_t$ induces a simpler second-order equation.  
Let $y=\ln s$ and define $\kappa(y)=r(s)-\delta(s),$ $\nu(y)=\sigma(s)$ and $\rho(y)=r(s).$ Then
\begin{align*}
dY_t
&=(r(S_t)-\delta(S_t)+\frac{1}{2}\sigma^2(S_t)) \,dt+\sigma(S_t) \,dW_t \\
&=(\kappa(Y_t)+\frac{1}{2}\nu^2(Y_t))\,dt+\nu(Y_t)\,dW_t\;.
\end{align*}
The corresponding equation \eqref{eqn:2nd_ode} becomes
$$\frac{1}{2}\nu^2(y)g''(y)+(\kappa(y)-\frac{1}{2}\nu^2(y))g'(y)-\rho(y)g(y)=-\lambda g(y)$$
where $g(y)=h(s).$

\section{Examples}
\label{sec:ex}
In this section, we explore examples of Ross recovery. 
Denote by $S_t$ the  composite stock price index as discussed in Section \ref{sec:recovery_thm}. Occasionally, for convenience, we say $S_t$ is the stock price without ambiguity.
The classical Black-Scholes stock model and the exponential CIR stock model are discussed with constant short interest rate and constant dividend rate in Section  \ref{sec:BS} and \ref{sec:expo_CIR}, respectively.
The classical Black-Scholes stock model with log dividend rate  is explored in Section \ref{sec:log_dividend}.

\subsection{The Black-Scholes model}
\label{sec:BS}
The classical Black-Scholes stock model   with constant short interest rate and constant dividend rate is discussed.
The dividends of the stock are paid out continuously with
rate $\delta\,dt.$
Suppose $S_t$ follows a geometric Brownian motion
$$dS_t=(r-\delta+\sigma^2)S_t\,dt+\sigma S_t\,dW_t\;,\;S_0=1$$
and the numeraire is $G_t=S_te^{\delta t}.$ 
The corresponding second-order equation is
\begin{equation*} 
\mathcal{L}h(s)=\frac{1}{2}\sigma^2s^2h''(s)+(r-\delta)sh'(s)-rh(s)=-\lambda h(s)\;. 
\end{equation*}
By direct calculation, we have $\overline{\lambda}=\frac{1}{2}(\frac{\sigma}{2}-\frac{r-\delta}{\sigma})^2+r.$
For $\lambda\leq\overline{\lambda},$
it can be easily shown that the value $M_\lambda$ is
$$M_\lambda=\frac{1}{2}-\frac{r-\delta}{\sigma^2}+\sqrt{\left(\frac{1}{2}-\frac{r-\delta}{\sigma^2}\right)^{2}+\frac{2(r-\lambda)}{\sigma^2}}$$ and the function  corresponding to the tuple $(\lambda,M_\lambda)$ is 
$$h_\lambda(s):=s^{\frac{1}{2}-\frac{r-\delta}{\sigma^2}+\sqrt{\left(\frac{1}{2}-\frac{r-\delta}{\sigma^2}\right)^{2}+\frac{2(r-\lambda)}{\sigma^2}}}\;.$$
The function $\gamma(s)$ is
$ s^{-\frac{2(r-\delta)}{\sigma^2}}.$

We find the admissible set $\mathcal{A}.$
It can be easily checked that every candidate pair is admissible by using the method in Section \ref{sec:mart_condi}, thus $\mathcal{A}=\mathcal{U}.$
As is well-known, both endpoints $0$ and $\infty$ of the geometric Brownian motion are natural boundaries.
Applying Theorem \ref{thm:usual_condi_iff_condi}, Proposition \ref{prop:U_when_overline_lambda_r}  and Theorem \ref{thm:usual_condi_const_r}, we obtain
\begin{equation*}
\begin{aligned}
&\mathcal{A}= \{\,(\lambda,M_\lambda)\,|\,\lambda\leq\overline{\lambda}\,\}\;&&\textnormal{ if }\;2(r-\delta)<\sigma^2\;, \\
&\mathcal{A}= \{\,(\lambda,M_\lambda)\,|\,\lambda<r\,\}\;&&\textnormal{ if }\;2(r-\delta)\geq\sigma^2\;. 
\end{aligned} 
\end{equation*}

\subsection{Exponential CIR model}
\label{sec:expo_CIR}

We explore an example of stock model with 
inaccessible entrance $0$ boundary.
By Theorem \ref{thm:usual_condi_iff_condi}, the usual set is empty, that is,
$$\mathcal{U}=\varnothing\;. $$
Even though $\mathcal{U}$ is empty, it would be interesting  to find the set $\mathcal{M}.$
Assume that the short interest rate $r$ and the dividend rate $\delta$ are constants. Put $\theta:=r-\delta.$
Let $Y_t$ be an extended CIR process given by
$$dY_t=(\theta+\frac{1}{2}\sigma^2Y_t)\,dt+\sigma\sqrt{Y_t}\,dW_t $$
with $2\theta\geq\sigma^2.$
It is well known that the left boundary $0$ and the right boundary $\infty$ are entrance and natural, respectively.  
Assume the stock price follows $S_t=e^{Y_t}$ so that
$$dS_t=(\theta+\sigma^2\ln S_t)S_t\,dt+\sigma\sqrt{\ln S_t}\,S_t\,dW_t\;.$$

The corresponding second-order differential equation is
$$\frac{1}{2}\sigma^2\ln(s)s^2h''(s)+\theta s h'(s)-rh(s)=-\lambda h(s)\;.$$
However, the process $Y_t:=\log S_t$ induces a simpler second-order equation  
$$\frac{1}{2}\sigma^2yg''(y)+(\theta-\frac{1}{2}\sigma^2y)g'(y)-rg(y)=-\lambda g(y)\;.$$
It can be easily checked that
$$g_\lambda(y):=M\left(\frac{2(r-\lambda)}{\sigma^2},\frac{2\theta}{\sigma^2},y\right)$$
is a solution corresponding to $(\lambda,M_\lambda)$ for   $\lambda\leq r=\overline{\lambda}.$ It is known that the confluent hypergeometric function $M(\alpha,\beta,y)$ is positive if and only if $\alpha\leq0.$
Refer to \cite{qin2014positive} for more details. We obtain that 
\begin{equation}\label{eqn:Y_under_P}
dY_t=\left(\theta+\frac{1}{2}\sigma^2Y_t+\sigma^2Y_t\,\frac{g_\lambda'(Y_t)}{g_\lambda(Y_t)}\right)\,dt+\sigma\sqrt{Y_t}\,dB_t
\end{equation}
where $B_t$ is a Brownian motion under the corresponding transformed measure.

It can be easily checked that 
$e^{\lambda t}\,g_\lambda(Y_{t})\,G_t^{-1}$
is a martingale.  
By considering the asymptotic behavior $M(\alpha,\beta,y)\sim e^{y}y^{\alpha-\beta}/\Gamma(\alpha)$ as $y\rightarrow\infty$ and the fact that $M'(\alpha,\beta,y)=(\alpha/\beta)M(\alpha+1,\beta+1,y),$
we obtain  as $y\rightarrow\infty,$
$$\frac{g_\lambda'(y)}{g_\lambda(y)}\sim \frac{\sigma^2}{2\theta}\;.$$
The drift of equation \ref{eqn:Y_under_P} has linear growth rate, as the CIR model, $Y_t$ 
does not explode by the criteria in equation \eqref{eqn:criteria}.
In conclusion,
the function corresponding to $(\lambda,M_\lambda)$ is
$h_\lambda(s):=g_\lambda(\ln s)$ and
\begin{align*}
\mathcal{M}
&=\left.\left\{\,\left(\lambda\,,\,\frac{h_\lambda'(S_0)}{h_\lambda(S_0)}\right)\,\right|\,\lambda\leq r\,\right\}\\
&=\left.\left\{\,\left(\lambda\,,\,\frac{r-\lambda}{\theta S_0}\cdot\frac{M\left(\frac{2(r-\lambda)}{\sigma^2}+1,\frac{2\theta}{\sigma^2}+1,\ln S_0\right)}{ M\left(\frac{2(r-\lambda)}{\sigma^2},\frac{2\theta}{\sigma^2},\ln S_0\right)}\right)\,\right|\,\lambda\leq r\,\right\} \;.
\end{align*}

\subsection{Log dividend models}
\label{sec:log_dividend}

In this section, we explore the possibility of recovering when the dividends of the stock are paid out continuously with
rate $b \log S_t\,dt.$
Suppose $S_t$ has a constant volatility $\sigma$ and the short interest rate is a constant $r.$
$$dS_t=(r+\sigma^2-b\log S_t)S_t\,dt+\sigma S_t\,dW_t\;.$$
It can be easily shown that $S_t=e^{Y_t}$ where
$$dY_t=(r+\frac{1}{2}\sigma^2-bY_t)\,dt+\sigma\,dW_t\;.$$
The corresponding second-order differential equation is
$$\frac{1}{2}\sigma^2s^2h''(s)+(r-b\log s)sh'(s)-rh(s)=-\lambda h(s)\;.$$
Substituting $s=e^y$ and $h(s)=g(y),$ it follows that
$$\frac{1}{2}\sigma^2g''(y)+(r-\frac{1}{2}\sigma^2-by)g'(y)-rg(y)=-\lambda g(y)\;.$$
One can check that for some normalizing constant $c,$
$$g_\lambda(y)=c\left(\frac{M(\frac{r-\lambda}{2b},\frac{1}{2},\frac{b}{\sigma^2}(y-\kappa)^2)}{\Gamma(\frac{1}{2}+\frac{r-\lambda}{2b})}+2(y-\kappa)\sqrt{\frac{b}{\sigma^2}}\frac{M(\frac{r-\lambda}{2b}+\frac{1}{2},\frac{3}{2},\frac{b}{\sigma^2}(y-\kappa)^2)}{\Gamma(\frac{r-\lambda}{2b})}\right)$$
is an admissible function (i.e., positive increasing solution)
with $g_\lambda(-\infty)=0,$ $g_\lambda(\infty)=\infty$ for $\lambda\leq r=\overline{\lambda}$
where $\kappa=\frac{r}{b}-\frac{\sigma^2}{2b}$ and $M(\cdot,\cdot,\cdot)$ is the confluent hypergeometric function.

The corresponding transformed measure $\mathbb{P}$ is 
$$\left.\frac{d\mathbb{P}}{d\mathbb{Q}}\right|_{\mathcal{F}_{t}}=e^{\lambda t}\,g_\lambda(Y_{t})\,G_t^{-1}=e^{\lambda t}\,g_\lambda(\log S_t)\,G_t^{-1}$$
under which the dynamics of $Y_t$ is
\begin{equation}\label{eqn:Y}
dY_t=\left(r+\frac{1}{2}\sigma^2-bY_t+\sigma^2\,\frac{g_\lambda'(Y_t)}{g_\lambda(Y_t)}\right)\,dt+\sigma\,dB_t\;.  
\end{equation}
Thus, we obtain the $\mathbb{P}$-dynamics of $S_t=e^{Y_t}.$ 

It can be easily checked that 
$e^{\lambda t}\,g_\lambda(Y_{t})\,G_t^{-1}$
is a martingale. 
By considering the asymptotic behaviors of $M(\cdot,\cdot,\cdot )$ and $M'(\cdot,\cdot,\cdot )$ as in Section \ref{sec:log_dividend},
we obtain that as $|y|\rightarrow\infty,$
$$\frac{g_\lambda'(y)}{g_\lambda(y)}\sim \frac{4b}{3\sigma^2}|y|\;.$$
Because the drift of equation \eqref{eqn:Y} has linear growth rate, by the criteria in equation \eqref{eqn:criteria}, we know 
$Y_t$ does not explode with the dynamics of $Y_t$ in equation \eqref{eqn:Y}.
In conclusion, we get
\begin{align*}
\mathcal{A}
&=\left.\left\{\,\left(\lambda\,,\,\frac{h_\lambda'(S_0)}{h_\lambda(S_0)}\right)\,\right|\,\lambda\leq r\,\right\}
\end{align*}
where $h_\lambda(s)=g_\lambda(\ln s).$


\section{Conclusion}
\label{sec:conclusion}

This paper determines a representative agent model from a risk-neutral measure in a continuous-time setting. 
One of the key ideas of the argument is that
the reciprocal of the pricing kernel is expressed by the {\em transition independent} form 
\begin{equation*} 
e^{\beta t}\,\phi(X_t)
\end{equation*}
for a constant $\beta$ and a positive function $\phi.$
This form is originated from the 
continuous-time consumption-based asset pricing model, which is a well-known asset pricing theory.
Based on the theory, several conditions such as
the martingale condition, divergence to infinity and the usual conditions are assumed on  
the function $\phi$ and the underlying process $X_t.$ 
The pair $(\beta,\phi)$ satisfying these conditions was called an admissible pair.

The  main purpose of this paper is to investigate the
admissible pairs.  
A necessary and sufficient condition for the existence of admissible pairs was explored.
Moreover, we showed that, if it exists, the set of admissible pairs is expressed by a one-parameter family. The admissible set is determined by the lower bound of the martingale set $\mathcal{M}$ and  the upper bound of the usual set $\mathcal{U}.$ As a special case, when the short interest rate is a constant, the set $\mathcal{U}$ was presented.

The following extensions for future research are suggested. First, it would be interesting to extend the recovery to multi-dimensional state variables.
In this case, the corresponding Sturm-Liouville equation is a second-order partial differential equation.
Second, it would be valuable to find   economically meaningful methods to determine $\beta.$ 
We could 
not provide such methods in this article.
Finally, much work remains 
to be conducted on the implementation and empirical testing of recovery theory in future
research.

\appendix

\section{Proof of Theorem \ref{thm:monotone_usual}}
\label{app:pf_thm_II}

\begin{lemma}\label{lem:monotone}
	Assume $\delta<\lambda\leq\overline{\lambda}.$
	Let $g$ and $h$ be the functions corresponding to tuple $(\delta,M_\delta)$ and $(\lambda,M_\lambda),$ respectively. Then we have $g'g^{-1}>h'h^{-1}.$ 
\end{lemma}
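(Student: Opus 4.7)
The plan is to rephrase the claim $g'/g > h'/h$ in terms of the Wronskian $W := gh' - g'h$. Since $g,h>0$, one has $W = -gh\,(g'/g - h'/h)$, so the pointwise inequality is equivalent to showing $W(x) < 0$ for every $x \in (0, \infty)$.

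The first step is to derive the evolution equation of $W$. A direct calculation using $\mathcal{L}g = -\delta g$ and $\mathcal{L}h = -\lambda h$ yields
\[
W' + \frac{2k}{\sigma^2}\,W \;=\; \frac{2(\delta - \lambda)}{\sigma^2}\,gh.
\]
Dividing by the function $\gamma(x) = \exp\bigl(-\int_\xi^x \tfrac{2k(s)}{\sigma^2(s)}\,ds\bigr)$ from Definition \ref{def:boundary_clasify}, this becomes
\[
\Bigl(\frac{W}{\gamma}\Bigr)'(x) \;=\; \frac{2(\delta - \lambda)}{\sigma^2(x)\,\gamma(x)}\, g(x)\,h(x),
\]
which is strictly negative on $(0,\infty)$ because $\delta < \lambda$ and every other factor is positive. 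Hence $W/\gamma$ is strictly decreasing on $(0,\infty)$, with $W(\xi)/\gamma(\xi) = M_\lambda - M_\delta$.

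Next I would argue $\lim_{x\to 0^+} W(x)/\gamma(x) \le 0$; combined with strict monotonicity, this immediately gives $W(x)/\gamma(x) < 0$ on all of $(0,\infty)$ (and, as a byproduct, the strict inequality $M_\delta > M_\lambda$). The boundary analysis at $0^+$ is the main obstacle. The intuition is that both $g = h_{M_\delta}$ and $h = h_{M_\lambda}$ are the \emph{maximal} positive solutions of their respective eigenvalue equations --- equivalently, in the natural-boundary setup implicit in the surrounding Theorem \ref{thm:monotone_usual}, the minimal positive solutions vanishing at $0$ --- so their logarithmic derivatives behave compatibly as $x \to 0^+$ and the normalized Wronskian vanishes there. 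This is transparent in the Black--Scholes example of Section \ref{sec:BS}: with $h_\lambda(s) = s^{a(\lambda)}$ and $\gamma(s) = s^{-2(r-\delta)/\sigma^2}$, one computes $W/\gamma = (a(\lambda)-a(\delta))\,s^{a(\lambda)-b(\delta)}$, and the exponent $a(\lambda)-b(\delta)$ is strictly positive, so $W/\gamma \to 0$ at $0^+$. In general, the rate comparison between $g$ and $h$ at $0$ must be extracted from the one-dimensional Martin boundary / boundary Harnack theory for the operator $\mathcal{L}$ (as in \cite{pinsky1995positive}), which is where the real work lies; once the estimate $\lim_{x\to 0^+}W(x)/\gamma(x) \le 0$ is in hand, the strict monotonicity of $W/\gamma$ closes the argument with no further effort.
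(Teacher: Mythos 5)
Your algebra is sound: with $W=gh'-g'h$ one indeed has $W=-gh\,(g'/g-h'/h)$, the identity $(W/\gamma)'=\tfrac{2(\delta-\lambda)}{\sigma^2\gamma}\,gh<0$ follows correctly from the two eigenvalue equations, $W(\xi)/\gamma(\xi)=M_\lambda-M_\delta$, and the lemma would follow once $\limsup_{x\to0+}W(x)/\gamma(x)\le0$ is established. (For reference, the paper does not prove this lemma internally; it simply cites Lemma E.2 of \cite{park2016ross}.) The difficulty is that this boundary estimate is not a technical remainder but essentially the whole content of the lemma: conversely, $W<0$ everywhere trivially implies $\limsup_{x\to0+}W/\gamma\le0$, so your reduction trades the statement for an equivalent one, and the step you defer to ``Martin boundary / boundary Harnack theory'' is exactly where all the work lies. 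It also already contains the nontrivial assertion $M_\delta>M_\lambda$ (evaluate at $\xi$), which you list as a byproduct rather than prove. Verifying the limit in the Black--Scholes example does not substitute for an argument.

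Moreover, the heuristic you offer for that step is not available in the generality in which the lemma is stated. The lemma assumes nothing about the boundary classification of $0$, and $h_{M(\lambda)}$ need not vanish there: in an entrance-boundary situation such as the exponential CIR example of Section \ref{sec:expo_CIR}, the function corresponding to $(\lambda,M_\lambda)$ has a finite positive limit at the left endpoint (there $W/\gamma\to0$ only because $\gamma$ blows up), so ``both are the minimal positive solutions vanishing at $0$'' cannot serve as the basis of the proof. What is actually needed is the characterization of $h_{M(\lambda)}$ as the solution of minimal growth at the left boundary --- for instance as the limit of the solutions $h_b$ with $b>M_\lambda$, which have a zero in $(0,\xi)$, in the spirit of Lemma \ref{lem:slope} --- together with a comparison of these minimal solutions across the two eigenvalues $\delta<\lambda$; none of this is carried out. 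As written, the proposal is a correct reformulation plus an unproved (and, as motivated, incorrectly justified) claim, so it does not yet constitute a proof.
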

\noindent For proof, see Lemma E.2 in \cite{park2016ross}.

\noindent   Now we prove Theorem \ref{thm:monotone_usual}.

\begin{proof}
	Assume that $h$ satisfies the usual conditions.
	From Lemma \ref{lem:monotone}, it is straightforward that $g'>0.$  We now show that 
	$\lim_{x\rightarrow\infty}g(x)=\infty\,.$ 
	It is enough to prove that $g(x)>h(x)$ for $x>\xi.$
	Integrating by $\int_{\xi}^{x}$ to the inequality $g'g^{-1}>h'h^{-1}$ in the Lemma \ref{lem:monotone}, we have $\ln g(x)-\ln g(\xi)>\ln h(x)-\ln h(\xi).$ Since $g(\xi)=h(\xi)=1,$ it follows that $g(x)>h(x)$ for $x>\xi.$
	In a similar way, one can prove that 
	$\lim_{x\rightarrow0+}g(x)=0$ by showing $g(x)<h(x)$ for $0<x<\xi.$
\end{proof}

\section{Proof of Theorem \ref{thm:usual_condi_iff_condi}}
\label{app:pf_thm_I}

The following lemmas will be used to prove Theorem \ref{thm:usual_condi_iff_condi} when $r(x)$ is  bounded near $0.$ Recall the definition
$\overline{r}:=\inf_{x>0} r(x).$ 

\begin{lemma}\label{lem:no_local_extreme}
	Assume that $r(\cdot)\geq 0$ and $\lambda<\overline{r}.$
	Let $h$ be a solution of $\mathcal{L}h=-\lambda h.$
	Then $h$ can attain nether a positive local maximum nor a negative local minimum. 	 
\end{lemma}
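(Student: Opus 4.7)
The plan is to argue by direct contradiction using the standard second-order maximum principle applied to the ODE $\mathcal{L}h = -\lambda h$. Rewriting the equation in the convenient form
\begin{equation*}
\tfrac{1}{2}\sigma^2(x)\,h''(x) + k(x)\,h'(x) = (r(x)-\lambda)\,h(x),
\end{equation*}
the key observation is that under the hypotheses $r(\cdot)\ge 0$ and $\lambda<\overline{r}$, the coefficient $r(x)-\lambda$ on the right-hand side is strictly positive for every $x>0$, since $r(x)\ge \overline{r} > \lambda$.

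First I would rule out a positive local maximum. Suppose, toward a contradiction, that $h$ attains a positive local maximum at some $x_0\in(0,\infty)$. Then $h(x_0)>0$, $h'(x_0)=0$, and $h''(x_0)\le 0$. Plugging into the ODE at $x_0$ gives
\begin{equation*}
\tfrac{1}{2}\sigma^2(x_0)\,h''(x_0) \;=\; (r(x_0)-\lambda)\,h(x_0).
\end{equation*}
The left-hand side is $\le 0$ because $\sigma^2(x_0)>0$ and $h''(x_0)\le 0$, while the right-hand side is strictly positive, a contradiction.

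Second, the negative local minimum case is completely symmetric. If $h(x_0)<0$, $h'(x_0)=0$, $h''(x_0)\ge 0$, then the same identity yields a non-negative left-hand side and a strictly negative right-hand side, again a contradiction.

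There is essentially no obstacle here beyond being careful about the strict versus weak inequalities: the second derivative at an extremum is only weakly signed ($h''(x_0)\le 0$ or $\ge 0$), but the strict inequality $r(x_0)-\lambda>0$ together with $h(x_0)\ne 0$ on the right-hand side supplies the strictness needed to close the contradiction. So the entire proof is two short paragraphs mirroring each other, with no auxiliary lemmas required.
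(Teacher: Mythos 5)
Your proof is correct and is essentially the same as the paper's: evaluate $\tfrac12\sigma^2 h'' + k h' = (r-\lambda)h$ at the extremum, use $h'(x_0)=0$, the weak sign of $h''(x_0)$, and the strict positivity $r(x_0)-\lambda\ge\overline{r}-\lambda>0$ to get a sign contradiction, with the negative-minimum case handled symmetrically. No differences worth noting.
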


\begin{proof}
	Suppose that $h$ has a positive local maximum at $x_0.$ Then 	 
	$h'(x_0)=0$ and $h''(x_0)\leq0.$ From $\mathcal{L}h=-\lambda h,$ it follows that
	$$\frac{1}{2}\sigma^2(x_0){h''(x_0)}=\frac{1}{2}\sigma^2(x_0){h''(x_0)}+k(x_0)h'(x_0)=(r(x_0)-\lambda)h(x_0)>0\;,$$
	which is a contradiction. In a similar way, one can show that 
	$h$ cannot attain  a negative local minimum.
\end{proof}

\begin{lemma}\label{lem:order_b_c}
	Assume  $r(\cdot)\geq 0$ and $\lambda<\overline{r}.$
	Let $b$ and $c$ be two numbers with $c<b,$ not necessarily to be between $m_\lambda$ and $M_\lambda,$ and let $h_b$ and $h_c$ be the functions corresponding to tuples $(\lambda,b)$ and $(\lambda,c),$ respectively. Then $h_b(x)>h_c(x)$ for $x>\xi$ and $h_b(x)<h_c(x)$ for $0<x<\xi.$
\end{lemma}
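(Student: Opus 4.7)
The plan is to pass to the difference $f(x) := h_b(x) - h_c(x)$ and read off its sign on each side of $\xi$ from Lemma \ref{lem:no_local_extreme}. By linearity of $\mathcal{L}$, the function $f$ solves $\mathcal{L} f = -\lambda f$, and from the initial-condition convention $h_b(\xi) = h_c(\xi) = 1$, $h_b'(\xi) = b$, $h_c'(\xi) = c$ we get
$$f(\xi) = 0, \qquad f'(\xi) = b - c > 0.$$
So $f$ is strictly positive on some right-neighborhood of $\xi$ and strictly negative on some left-neighborhood of $\xi$; the claim is exactly that neither sign can flip.

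For $x > \xi$ I would argue by contradiction: if $f$ vanishes somewhere in $(\xi, \infty)$, take $x_1$ to be the smallest such zero. Then $f > 0$ on $(\xi, x_1)$ with $f(\xi) = f(x_1) = 0$, so $f$ attains a positive maximum at some interior point of $(\xi, x_1)$. Lemma \ref{lem:no_local_extreme} applies because $\lambda < \overline{r}$, and forbids such a positive local maximum --- contradiction. Hence $f > 0$ throughout $(\xi, \infty)$, which is $h_b(x) > h_c(x)$ for $x > \xi$. The argument on $(0, \xi)$ is symmetric: if $f$ has a zero in $(0, \xi)$, take $x_2$ to be the largest one. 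Then $f < 0$ on $(x_2, \xi)$ with $f(x_2) = f(\xi) = 0$, yielding a negative local minimum in $(x_2, \xi)$, which is again ruled out by Lemma \ref{lem:no_local_extreme}.

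One nuance worth flagging is that the hypotheses allow $b$ and $c$ to lie outside $[m_\lambda, M_\lambda]$, so neither $h_b$ nor $h_c$ need be positive. This is harmless for the plan above because everything is phrased in terms of the sign of $f = h_b - h_c$, and Lemma \ref{lem:no_local_extreme} applies to \emph{any} solution of $\mathcal{L}h = -\lambda h$ with no positivity requirement on $h$. I do not expect a serious obstacle here: the substantive work --- the extremum-free structure of solutions when $\lambda < \overline{r}$ --- has already been packaged into Lemma \ref{lem:no_local_extreme}, and the present lemma reduces cleanly to applying it twice.
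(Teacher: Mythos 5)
Your proof is correct and follows essentially the same route as the paper: both pass to the difference $f=h_b-h_c$, note $f(\xi)=0$, $f'(\xi)=b-c>0$, and rule out a later zero via Lemma \ref{lem:no_local_extreme} (the paper phrases the contradiction as forcing $f\equiv 0$ between two zeros, while you locate a positive interior maximum directly, which is the same mechanism). No gaps.
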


\begin{proof}
	Since $h_c'(\xi)=c<b=h_b'(\xi)$ and $h_b(\xi)=h_c(\xi)=1,$ there exists an interval $(\xi,x_1)$ in which $h_c<h_b.$ Suppose $x_1<\infty.$ Then $h_b(x_1)=h_c(x_1).$ This means that $g:=h_b-b_c$ is a solution of $\mathcal{L}g=-\lambda g$ and $g$ has two zeros at $\xi$ and $x_1.$ By Lemma \ref{lem:no_local_extreme}, since $g$ can attain nether a positive local maximum nor a negative local minimum, $g$ should be identically zero. This leads us a contradiction.
	In a similar way, it can be shown that $h_b(x)<h_c(x)$ for $0<x<\xi.$
\end{proof}

\begin{lemma}\label{lem:slope}
	Assume that $r(\cdot)\geq 0$ and $\lambda<\overline{r}.$ Let  $h_{M(\lambda)}$ and $h_{m(\lambda)}$ be the functions corresponding to $(\lambda,M_\lambda)$ and $(\lambda,m_\lambda),$ respectively.  Then $h_{M(\lambda)}'>0$  and $h_{m(\lambda)}'<0.$ 
\end{lemma}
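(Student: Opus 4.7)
The plan is to combine the structural information from Lemma A.1 with a Wronskian perturbation argument that exploits the extremality of $M_\lambda$ (and symmetrically of $m_\lambda$).

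First I would pin down the possible shapes of a positive solution. Lemma A.1 together with the equation $\mathcal{L}h=-\lambda h$ shows that at every interior critical point $x_0$ of a positive $h$,
\begin{equation*}
h''(x_0)=\tfrac{2(r(x_0)-\lambda)}{\sigma^2(x_0)}h(x_0)>0,
\end{equation*}
so every critical point is a strict local minimum; a second critical point would create an intermediate positive local maximum, contradicting Lemma A.1. Hence $h_{M_\lambda}$ admits at most one critical point on $(0,\infty)$, leaving three shapes: strictly increasing, strictly decreasing, or a single strict local minimum $x_*$ with $h_{M_\lambda}'<0$ on $(0,x_*)$ and $h_{M_\lambda}'>0$ on $(x_*,\infty)$.

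Second I would observe that the solution $h_0$ with $h_0(\xi)=1$, $h_0'(\xi)=0$ is itself a candidate: by the dichotomy applied to $h_0$, its unique critical point is $\xi$, so $h_0\geq 1$ on $(0,\infty)$. Hence $m_\lambda\leq 0\leq M_\lambda$, ruling out the strictly-decreasing shape for $h_{M_\lambda}$. The Wronskian identity underlying Lemma A.2 gives
\begin{equation*}
h_0(x)h_{M_\lambda}'(x)-h_0'(x)h_{M_\lambda}(x)=M_\lambda\gamma(x)\geq 0,
\end{equation*}
so $h_{M_\lambda}'/h_{M_\lambda}\geq h_0'/h_0$; since $h_0'>0$ on $(\xi,\infty)$, this already yields $h_{M_\lambda}'>0$ on $(\xi,\infty)$.

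The substantive step is to exclude the local-minimum shape with $x_*\in(0,\xi]$ (the edge case $x_*=\xi$ corresponds to $M_\lambda=0$). Assuming such $x_*$ exists, the plan is to produce a candidate whose slope at $\xi$ strictly exceeds $M_\lambda$ by perturbing the initial data at $x_*$. For small $\delta>0$, let $\eta_\delta$ solve the ODE with $\eta_\delta(x_*)=h_{M_\lambda}(x_*)$ and $\eta_\delta'(x_*)=\delta$. The Wronskian of $\eta_\delta$ against $h_{M_\lambda}$ equals $h_{M_\lambda}(x_*)\delta\gamma(x)/\gamma(x_*)>0$, and evaluating at $\xi$ gives $\eta_\delta'(\xi)/\eta_\delta(\xi)>M_\lambda$. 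Positivity of $\eta_\delta$ on $[x_*,\infty)$ follows from the shape dichotomy applied to $\eta_\delta$ (whose slope at $x_*$ is strictly positive); positivity on $(0,x_*]$ follows from
\begin{equation*}
\frac{\eta_\delta(x)}{h_{M_\lambda}(x)}=1+\frac{\delta}{\gamma(x_*)}\int_{x_*}^{x}\frac{\gamma(s)}{h_{M_\lambda}^2(s)}\,ds,
\end{equation*}
the local-minimum lower bound $h_{M_\lambda}\geq h_{M_\lambda}(x_*)>0$ on $(0,x_*)$, and integrability of $\gamma/h_{M_\lambda}^2$ near $0$: for $\delta$ small, the right-hand side stays positive, so $\eta_\delta/\eta_\delta(\xi)$ is a candidate with initial slope at $\xi$ exceeding $M_\lambda$, contradicting the definition of $M_\lambda$. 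The statement $h_{m_\lambda}'<0$ then follows by the completely symmetric argument, interchanging the roles of the two endpoints.

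The main obstacle I foresee is closing out the positivity argument for $\eta_\delta$ near the left boundary: one needs left-endpoint integrability of $\gamma/h_{M_\lambda}^2$ to control the perturbation, which is where the coefficients' behavior near $0$ truly enters the proof. If this integrability fails, the perturbation argument must be refined (for instance, by perturbing at a point slightly to the right of $x_*$ rather than at $x_*$ itself) while keeping the overall architecture intact.
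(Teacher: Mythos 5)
Your preliminary reductions are fine: Lemma \ref{lem:no_local_extreme} does force every critical point of a positive solution to be a strict local minimum, the solution $h_0$ with $h_0(\xi)=1,\ h_0'(\xi)=0$ is indeed a candidate (so $m_\lambda\le 0\le M_\lambda$), and the Wronskian comparison with $h_0$ gives $h_{M(\lambda)}'>0$ on $(\xi,\infty)$. The gap is exactly the step you flag, and it is fatal rather than technical: the integrability $\int_0^{x_*}\gamma(s)\,h_{M(\lambda)}^{-2}(s)\,ds<\infty$ that your perturbation needs can \emph{never} hold. Indeed, if $\int_0^{\xi}\gamma\,h_{M(\lambda)}^{-2}\,ds$ were finite, then the reduction-of-order solution
\begin{equation*}
u_\epsilon(x)=h_{M(\lambda)}(x)\left(1+\epsilon\int_\xi^x \frac{\gamma(s)}{h_{M(\lambda)}^2(s)}\,ds\right)
\end{equation*}
would, for $\epsilon>0$ small, be positive on all of $(0,\infty)$ with $u_\epsilon(\xi)=1$ and $u_\epsilon'(\xi)=M_\lambda+\epsilon$, contradicting the maximality of $M_\lambda$ (equivalently: $h_{M(\lambda)}$, being the smallest candidate on $(0,\xi)$ by Lemma \ref{lem:order_b_c}, is the principal solution at $0$). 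Hence $\int_0^{\xi}\gamma\,h_{M(\lambda)}^{-2}\,ds=\infty$ always, your ``Case A'' is vacuous, and every solution whose Wronskian against $h_{M(\lambda)}$ is positive --- in particular every $\eta_\delta$, no matter where you place the perturbation point --- has ratio to $h_{M(\lambda)}$ of the form $1+c\int_{x_0}^{x}\gamma\,h_{M(\lambda)}^{-2}\,ds$ with $c>0$, which tends to $-\infty$ as $x\rightarrow0+$, so $\eta_\delta$ goes negative near $0$ for every $\delta>0$. The contradiction you are trying to manufacture is precisely the existence of a positive solution with slope above $M_\lambda$, which is impossible by the very definition of $M_\lambda$, independently of the shape of $h_{M(\lambda)}$; so the interior-minimum shape (including the edge case $M_\lambda=0$) is never excluded by this route, and the lemma is not proved. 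Your proposed refinement (perturbing slightly to the right of $x_*$) fails for the same reason.

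The paper avoids this trap by approximating from above rather than perturbing the extremal solution: for $b>M_\lambda$ the solution $h_b$ cannot be everywhere positive, and by Lemma \ref{lem:order_b_c} it dominates $h_{M(\lambda)}$ on $(\xi,\infty)$, so it must vanish at some $x_0\in(0,\xi)$; Lemma \ref{lem:no_local_extreme} then forces $h_b$ to be increasing on all of $(0,\infty)$ (no positive local maximum to the right of $x_0$, no negative local minimum to the left of it). Writing $h_b$ as a linear combination of $h_{M(\lambda)}$ and $h_{m(\lambda)}$ and letting $b\downarrow M_\lambda$ exhibits $h_{M(\lambda)}$ as a pointwise limit of increasing functions, hence nondecreasing, and strictness of $h_{M(\lambda)}'>0$ then follows from the ODE at a putative critical point, exactly as in your shape argument. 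If you want to keep your architecture, the interior-minimum exclusion must be replaced by an argument of this kind (monotonicity inherited from the sign-changing solutions with $b>M_\lambda$), not by a perturbation whose positivity near $0$ is structurally impossible.
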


\begin{proof}
	We show that $h_{M(\lambda)}'>0.$    
	Let $b$ be a real number with $b>M_\lambda$ and denote by $h_b$  the function corresponding to $(\lambda,b).$
	We prove that  $h_b$ is a monotone increasing function. 
	By definition of $M_\lambda,$ $h_b$ has a zero at a point $x_0>0.$ It follows that $x_0<\xi$ because $h_{M(\lambda)}$ is positive and $h_b(x)>h_{M(\lambda)}(x)$ for $x>\xi$ by Lemma \ref{lem:order_b_c}. We have that  $h_b(x)$ is monotone increasing on $x>x_0$ since $h_b(x_0)=0$ and $h_b(\xi)=1,$ otherwise $h_b$ has a positive local maximum, which is a contradiction to Lemma \ref{lem:no_local_extreme}. Clearly, $h_b'(x_0)>0$ because if $h_b'(x_0)=0,$ $h_b$ is identically zero.
	$h_b$ is strictly increasing near $x_0,$ and thus $h_b$ is monotone increasing on $x<x_0$ since $h_b$ has no negative local minimum. In conclusion, $h_b$ is monotone increasing on $(0,\infty).$

	We now show that   	$h_{M(\lambda)}'(x)>0$ for all $x>0.$
	It can be easily shown that   
	$$h_{M(\lambda)}(x)=\lim_{b\rightarrow M(\lambda)+} h_b(x)$$   
	because $h_b$ can be expressed by the linear combination of 
	$h_{M(\lambda)}$ and $h_{M(\lambda)}:$
	$$h_b=\frac{b-m}{M -m}\,h_{M}+\frac{M-b}{M-m}\,h_{m}\;.$$
	Here, for a moment, we used $M$ and $m$ instead of $M(\lambda)$ and $m(\lambda),$ respectively, to avoid the heavy notions.
	Since $h_{M(\lambda)}$ is the limit of monotone increasing functions, $h_{M(\lambda)}$ is a monotone increasing function, that is,  
	$h_{M(\lambda)}'(x)\geq0$ for all $x>0.$
	Suppose that $h_{M(\lambda)}'(x_1)=0$ at some point $x_1>0.$ Then 
	$$\frac{1}{2}\sigma^2(x_1){h''(x_1)}=\frac{1}{2}\sigma^2(x_1){h''(x_1)}+k(x_1)h'(x_1)=(r(x_1)-\lambda)h(x_1)>0\;.$$
	Here, for a moment, we used $h$ instead of $h_{M(\lambda)}$ to avoid the heavy notions.  Thus, 
	$h_{M(\lambda)}''(x_1)>0,$ which  contradicts to the fact that $h_{M(\lambda)}$ is monotone increasing.
	In conclusion, it is obtained that 
	$h_{M(\lambda)}'(x)>0$ for all $x>0.$
	It can be shown that   $h_{m(\lambda)}'(x)<0$ for all $x>0$ in a similar way.
\end{proof}

We now prove Theorem \ref{thm:usual_condi_iff_condi}
\begin{proof}
	($\Leftarrow$) Suppose that $0$ is a natural boundary. Note that by Assumption \ref{assume:X}, $\infty$ is a natural or entrance boundary.
	Fix $\lambda<\overline{r}$ and let $h$ be the function corresponding to the tuple $(\lambda,M_\lambda).$ We show that $h$ satisfies the usual conditions. 
	By Lemma \ref{lem:slope}, it is obtained that $h'>0,$ which is one of the usual conditions. Now we show that $\lim_{x\rightarrow\infty}h(x)=\infty.$ Suppose that $\lim_{x\rightarrow\infty}h(x)$ is finite.
	Recall the function of $\gamma$ from Definition \ref{def:boundary_clasify}.
	By direct calculation, 	\begin{align}\label{eqn:a}
	\left(\frac{h'}{\gamma}\right)'=\frac{2(r-\lambda)h}{\sigma^2\gamma}\;.
	\end{align}
	It follows that   
	\begin{equation}\label{eqn:b}
	h'(x)=\gamma(x)\left(h'(\xi)+\int_{\xi}^{x}\frac{2(r(s)-\lambda)h(s)}{\sigma^2(s)\gamma(s)}\,ds \right)\;.
	\end{equation} 
	The two terms in the right-hand side are in $L^1(\xi,\infty)$ because they are positive and the integration of the left hand side $\int_\xi^\infty h'(y)\,dy=\lim_{x\rightarrow\infty}h(x)-h(\xi)$ is finite. Thus, $\gamma(x)\int_{\xi}^{x}\frac{2 h(s)}{\sigma^2(s)\gamma(s)}\,ds$ is in $L^1(\xi,\infty)$ since $r(x)-\lambda\geq\overline{r}-\lambda>0.$   On the other hand,
	since $h$ is increasing and $h(\xi)=1,$ we know
	$$R(x)=\gamma(x)\int_{\xi}^x\frac{2}{\sigma^2(s)\gamma(s)} \,ds\leq\gamma(x)\int_{\xi}^x\frac{2h(s)}{\sigma^2(s)\gamma(s)} \,ds\;.$$
	Therefore, $R\in L^1(\xi,\infty),$ which contracts the assumption that $\infty$ is inaccessible. 
	
	Now we prove that  $\lim_{x\rightarrow0+}h(x)=0.$
	Suppose $\lim_{x\rightarrow0+}h(x)>0.$ Since $h'/\gamma$ is increasing by equation \eqref{eqn:a} and $h'$ is positive, the limit exists
	\begin{equation}\label{eqn:aa}
	\lim_{x\rightarrow0+}\frac{h'(x)}{\gamma(x)}=C\geq 0\;. 
	\end{equation}
	Integrating the equation  \eqref{eqn:a} by $\int_0^x,$ we have
	\begin{equation}\label{eqn:ab}
	h'(x)=\gamma(x)\left(C+\int_{0}^{x}\frac{2(r(s)-\lambda)h(s)}{\sigma^2(s)\gamma(s)}\,ds \right)\;. 
	\end{equation}
	The two terms in the right-hand side are in $L^1(0,\xi)$ because they are nonnegative and $\int_0^\xi h'(x)\,dx=h(\xi)-\lim_{x\rightarrow0+}h(x)$ is finite.
	It follows that 
	$\gamma(x)\int_0^{x}\frac{2 h(s)}{\sigma^2(s)\gamma(s)}\,ds$ is in $L^1(0,\xi)$ since $r(x)-\lambda\geq\overline{r}-\lambda>0.$  Because $h$ is an increasing function and $\lim_{x\rightarrow0+}h(x)>0,$  we have that 
	$\gamma(x)\int_0^{x}\frac{2}{\sigma^2(s)\gamma(s)}\,ds$ is in $L^1(0,\xi).$ By the Fubini theorem, $$\int_0^\xi\gamma(x)\int_0^{x}\frac{2}{\sigma^2(s)\gamma(s)}\,ds\,dx
	=\int_0^\xi\frac{2}{\sigma^2(s)\gamma(s)}\int_s^{\xi}\gamma(x)\,dx\,ds$$
	is finite. Thus, $Q$ is in $L^1(0,\xi),$ which means that $0$ is not a natural boundary. This is a contradiction.
	
	($\Rightarrow$) Assume that the set $\mathcal{U}$ is nonempty. Let $(\lambda,M_{\lambda})$ be an element in the set and denote by $h$ the corresponding function. By Thorem \ref{thm:monotone_usual}, we may assume $\lambda<\overline{r}.$ We now prove that $0$ is a natural boundary. Suppose that $0$ is an entrance boundary, that is, $Q\in L^1(0,\xi).$  
	By the Fubini theorem, we know that
	\begin{equation}\label{eqn:ad} \gamma(x)\int_0^{x}\frac{2}{\sigma^2(s)\gamma(s)}\,ds \in L^1(0,\xi)\;.
	\end{equation}	
	Since $h$ is bounded on $(0,\xi)$ and $r$ is assumed to be bounded near $0,$
	it is obtained that  
	\begin{equation}\label{eqn:ac}
	\gamma(x)\int_0^{x}\frac{2(r(s)-\lambda)h(s)}{\sigma^2(s)\gamma(s)}\,ds \in L^1(0,\xi)\;.
	\end{equation}

	On the other hand, from equation \eqref{eqn:aa}, it follows that  
	$\lim_{x\rightarrow0+}\frac{h'(x)}{\gamma(x)}=C\geq 0\;. $
	We show that $C=0.$
	Suppose $C\neq0.$ Then, from equation \eqref{eqn:ab} and \eqref{eqn:ac},  $\gamma$ is in $L^1(0,\xi).$
	For $x<\xi/2,$ we know
	$$0\leq\frac{2}{\sigma^2(x)\gamma(x)}=\frac{2}{\sigma^2(x)\gamma(x)}\frac{\int_x^{\xi}\gamma(s)\,ds}{\int_x^{\xi}\gamma(s)\,ds}
	\leq\frac{2}{\sigma^2(x)\gamma(x)}\frac{\int_x^{\xi}\gamma(s)\,ds}{\int_{\xi/2}^{\xi}\gamma(s)\,ds}=\frac{-Q(x)}{\int_{\xi/2}^{\xi}\gamma(s)\,ds}\;,$$
	so we have that $c_1:=\int_0^\xi\frac{2}{\sigma^2(x)\gamma(x)}\,dx$ is finite  since $Q\in L^1(0,\xi).$
	It follows that $R\in L^1(0,\xi)$ because for $x<\xi,$
	$$0\leq-R(x)=\gamma(x)\int_x^{\xi}\frac{2}{\sigma^2(s)\gamma(s)} \,ds\leq\gamma(x)\int_0^{\xi}\frac{2}{\sigma^2(s)\gamma(s)} \,ds=c_1\gamma(x)\in L^1(0,\xi)\;.$$
	This is a contradiction because $\infty$ is an inaccessible boundary. Thus $C=0.$

	Now we prove that the hypothesis $Q\in L^1(0,\xi)$ induce a contradiction.
	Since $C=0$ and $h$ is increasing, from  equation \eqref{eqn:ab} and \eqref{eqn:ad}, we know
	\begin{align*} 
	\frac{h'(x)}{h(x)}
	&=\frac{\gamma(x)}{h(x)}\int_{0}^{x}\frac{2(r(s)-\lambda)h(s)}{\sigma^2(s)\gamma(s)}\,ds
	\leq{\gamma(x) }\int_{0}^{x}\frac{2(r(s)-\lambda)}{\sigma^2(s)\gamma(s)}\,ds
	\\
	&\leq{c_2\gamma(x) }\int_{0}^{x}\frac{2 }{\sigma^2(s)\gamma(s)}\,ds \in L^1(0,\xi)
	\end{align*}
	where $c_2:=|\lambda|+\sup_{0\leq x\leq\xi}r(x).$
	This implies that $\lim_{x\rightarrow0+}h(x)>0$ 
	because 
	$$-\lim_{x\rightarrow0+}\ln h(0)=\ln h(\xi)-\lim_{x\rightarrow0+}\ln h(0)=\int_0^\xi\frac{h'(x)}{h(x)}dx$$
	is finite. 
	This a contradiction to the assumption that $h$ satisfies the usual condition.
\end{proof}

\section{Proof of Theorem \ref{thm:usual_condi_const_r}}
\label{app:pf_thm_const_r}

Theorem \ref{thm:usual_condi_const_r} will be shown. For this purpose, we first prove
Theorem \ref{thm:iff_condi_increasing},
Lemma \ref{lemma:0_condition} and \ref{lemma:infty_condition} stated  below.
In this appendix, assume $r<\overline{\lambda}.$

\begin{thm}\label{thm:iff_condi_increasing}
	Let $r<\lambda\leq\overline{\lambda}.$ Denote by $h$ the function corresponding to $(\lambda,M_\lambda).$ 
	Then $\int_{\xi}^{\infty}\gamma(x)\,dx=\infty$ if and only if $h'>0.$ 
\end{thm}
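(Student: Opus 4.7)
\medskip
\noindent\textbf{Proof proposal.} The plan is to exploit the strict monotonicity of $p:=h'/\gamma$. Rewriting $\mathcal{L}h=-\lambda h$ in the self-adjoint form
\begin{equation*}
\left(\frac{h'}{\gamma}\right)'=\frac{2(r-\lambda)}{\sigma^{2}(x)\gamma(x)}\,h(x),
\end{equation*}
and using $\lambda>r$ together with $h>0$, we see that $p$ is strictly decreasing on $(0,\infty)$. This single observation drives both implications.

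For the $(\Rightarrow)$ direction I would argue by contradiction. Suppose $h'(x_{0})\leq 0$ at some $x_{0}>0$. Choose any $x_{1}>x_{0}$; strict monotonicity of $p$ then gives $p(x)\leq p(x_{1})<0$ for every $x\geq x_{1}$, whence $h'(x)\leq p(x_{1})\gamma(x)$. Integrating from $x_{1}$ and using $\int_{\xi}^{\infty}\gamma=\infty$ forces $h(x)\to-\infty$, contradicting $h>0$. Hence $h'>0$ on $(0,\infty)$, and as a bonus $M_{\lambda}=p(\xi)>0$.

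For the $(\Leftarrow)$ direction, assume $h'>0$. Then $h\geq 1$ on $[\xi,\infty)$ since $h$ is increasing from $h(\xi)=1$, while $p$ is positive and strictly decreasing, so $p(x)\to\alpha\geq 0$ as $x\to\infty$. Integrating $p'$ from $\xi$ to $\infty$,
\begin{equation*}
M_{\lambda}=\alpha+2(\lambda-r)\int_{\xi}^{\infty}\frac{h(s)}{\sigma^{2}(s)\gamma(s)}\,ds\geq 2(\lambda-r)\int_{\xi}^{\infty}\frac{ds}{\sigma^{2}(s)\gamma(s)},
\end{equation*}
so $\int_{\xi}^{\infty}ds/(\sigma^{2}\gamma)<\infty$. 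Now suppose for contradiction that $\int_{\xi}^{\infty}\gamma<\infty$. By Fubini,
\begin{equation*}
\int_{\xi}^{\infty}R(x)\,dx=\int_{\xi}^{\infty}\frac{2}{\sigma^{2}(s)\gamma(s)}\left(\int_{s}^{\infty}\gamma(x)\,dx\right)ds\leq 2\left(\int_{\xi}^{\infty}\gamma\right)\int_{\xi}^{\infty}\frac{ds}{\sigma^{2}\gamma}<\infty,
\end{equation*}
contradicting inaccessibility of $\infty$ (Assumption \ref{assume:X} and Definition \ref{def:boundary_clasify}), which requires $R\notin L^{1}(\xi,\infty)$. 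Therefore $\int_{\xi}^{\infty}\gamma=\infty$.

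The main obstacle I anticipate is the $(\Leftarrow)$ direction: the comparison $h\geq 1$ and the identity $M_{\lambda}-\alpha=2(\lambda-r)\int h/(\sigma^{2}\gamma)$ only yield integrability of $1/(\sigma^{2}\gamma)$ at $\infty$, not of $\gamma$ directly; the clever step is to invoke the Feller-type criterion via Fubini to convert the hypothesis $\int\gamma<\infty$ into a contradiction with the non-explosiveness of $X$. The $(\Rightarrow)$ half is comparatively routine once one spots that any sign change in $h'$ triggers decay of $h$ proportional to $-\int\gamma$.
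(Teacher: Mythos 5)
Your proof is correct and follows essentially the same route as the paper's: both directions rest on the identity $\left(h'/\gamma\right)'=\frac{2(r-\lambda)h}{\sigma^{2}\gamma}$, the bound $h\geq 1$ on $[\xi,\infty)$ coming from $h'>0$ and $h(\xi)=1$, and the inaccessibility of $\infty$ expressed as $R\notin L^{1}(\xi,\infty)$. Your deviations are minor reorganizations: strict monotonicity of $h'/\gamma$ lets you dispose of $h'(x_{0})\leq 0$ in one stroke where the paper first assumes $h'(x_{0})<0$ and then separately excludes zeros of $h'$ by a local-maximum argument, and in the converse you argue by contraposition (first extracting $\int_{\xi}^{\infty}\frac{ds}{\sigma^{2}\gamma}<\infty$ from the bounded decreasing limit of $h'/\gamma$, then applying Fubini) where the paper reads the divergence of $\int_{\xi}^{x}\gamma$ directly off the integrated identity.
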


\begin{proof} 
	Suppose that $\int_{\xi}^{\infty}\gamma(x)\,dx=\infty.$
	Recall from equation \eqref{eqn:b} that
	\begin{equation} \label{eqn:derivative_h_const_r}
	h'(x)=\gamma(x)\left(h'(\xi)-2(\lambda-r)\int_{\xi}^{x}\frac{h(s)}{\sigma^2(s)\gamma(s)}\,ds \right)\;.
	\end{equation} 
	If $h'(x_0)<0$ at some point $x_0,$ then for $x>x_0,$ 
	\begin{align*}
	h'(x)&=\gamma(x)\left(h'(\xi)-2(\lambda-r)\int_{\xi}^{x}\frac{h(s)}{\sigma^2(s)\gamma(s)}\,ds\right)\\
	&\leq\gamma(x)\left(h'(\xi)-2(\lambda-r)\int_{\xi}^{x_0}\frac{h(s)}{\sigma^2(s)\gamma(s)}\,ds\right)\\
	&=\frac{\gamma(x)}{\gamma(x_0)}\,h'(x_0)\;.
	\end{align*}
	Since  $\int_{\xi}^{\infty}\gamma(x)\,dx=\infty$ and $h'(x_0)<0,$ it follows that $h(x)\rightarrow-\infty$ as $x\rightarrow\infty,$ which is a contradiction to the fact that $h$ is positive. Therefore, we obtain that $h'\geq0.$
	We now show that $h'>0.$ Suppose that $h'(x_1)=0$ at some point. Then 
	$$\frac{1}{2}\sigma^2(x_1){h''(x_1)}=\frac{1}{2}\sigma^2(x_1){h''(x_1)}+k(x_1)h'(x_1)=-(\lambda-r)h(x_1)<0\;,$$	
	thus $h$ has local maximum at $x_1.$ This is a contradiction to the fact that $h'\geq0.$ 
	
	We now prove the converse. Assume that $h'>0.$   From equation \eqref{eqn:derivative_h_const_r}, we know
	\begin{equation*} 
	h'(\xi)\int_\xi^x\gamma(y)\,dy= h(x)-h(\xi)+2(\lambda-r)\int_\xi^x\gamma(y)\int_{\xi}^{y}\frac{h(s)}{\sigma^2(s)\gamma(s)}\,ds\,dy\;.
	\end{equation*} 
	To show that $\int_{\xi}^{\infty}\gamma(x)\,dx=\infty,$ since  the first term $h(x)$ of the right-hand side is positive, it is enough to show that 
	$$\int_\xi^\infty\gamma(x)\int_{\xi}^{x}\frac{h(s)}{\sigma^2(s)\gamma(s)}\,ds\,dx=\infty\;.$$
	On the other hand, because $\infty$ is an inaccessible boundary, we know  
	$$\int_\xi^\infty\gamma(x)\int_{\xi}^{x}\frac{1}{\sigma^2(s)\gamma(s)}\,ds\,dx=\infty\;.$$
	Since $h$ is positive and increasing, we obtain the desired result.
\end{proof}

\begin{lemma}\label{lemma:0_condition}
	Let $r<\lambda\leq\overline{\lambda}.$ If $h$ is a positive and increasing solution of $\mathcal{L}h=-\lambda h,$ then $h(0):=\lim_{x\rightarrow0+}h(x)=0.$
\end{lemma}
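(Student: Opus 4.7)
My plan is to argue by contradiction: assume $h(0):=\lim_{x\to 0+}h(x)>0$ and derive that the left endpoint $0$ would have to be accessible in the Feller sense ($R\in L^{1}(0,\xi)$), contradicting the inaccessibility of $0$ built into Assumption \ref{assume:X}. The main tool is the first-order identity
$$\left(\frac{h'}{\gamma}\right)'=\frac{2(r-\lambda)\,h}{\sigma^{2}\,\gamma},$$
already exploited in the proof of Theorem \ref{thm:usual_condi_iff_condi}. Integrating this from $x$ to $\xi$ for $x<\xi$ yields
$$h'(x)=\gamma(x)\,h'(\xi)+2(\lambda-r)\,\gamma(x)\int_{x}^{\xi}\frac{h(s)}{\sigma^{2}(s)\,\gamma(s)}\,ds,$$
and since $\lambda>r$ and $h$ is positive with $h'\ge 0$, both summands on the right are non-negative.

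Next, the hypothesis that $h$ is increasing combined with the contradiction assumption $h(0)>0$ forces $h'\in L^{1}(0,\xi)$, since $\int_{0}^{\xi} h'(x)\,dx = h(\xi)-h(0) < \infty$. Consequently each of the two non-negative summands in the representation above lies in $L^{1}(0,\xi)$. For the second summand, monotonicity also gives $h(s)\ge h(0)>0$ on $(0,\xi]$, hence
$$\gamma(x)\int_{x}^{\xi}\frac{h(s)}{\sigma^{2}(s)\,\gamma(s)}\,ds\;\ge\;h(0)\,\gamma(x)\int_{x}^{\xi}\frac{ds}{\sigma^{2}(s)\,\gamma(s)}\;=\;\frac{h(0)}{2}\,|R(x)|,$$
where $R$ is the function from Definition \ref{def:boundary_clasify}; for $x<\xi$ the defining integral of $R$ is oriented on $(\xi,x)$, so an absolute value is taken.

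Combining these two observations yields $|R|\in L^{1}(0,\xi)$, contradicting the inaccessibility of the left endpoint $0$ inherited from Assumption \ref{assume:X}. Hence the supposition $h(0)>0$ is untenable and $h(0)=0$. The only delicate step is keeping the signs and the direction of integration on $(0,\xi)$ straight; once the two summands have been confirmed non-negative and $h$ has been pinned between two positive constants on the interval, the comparison with $R$ is immediate. I observe that the argument uses only inaccessibility of $0$ and not the stronger ``natural'' property, which matches the way the lemma is stated.
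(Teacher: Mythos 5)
Your proof is correct, but it takes a genuinely different route from the paper's. The paper argues probabilistically: since $\tfrac12\sigma^2h''+kh'=(r-\lambda)h$, the process $e^{(\lambda-r)t}h(X_t)$ is a positive local martingale under the measure in which $X$ has drift $k$, hence a supermartingale, so $\mathbb{E}[e^{(\lambda-r)t}h(X_t)]\le h(\xi)$; monotonicity gives $e^{(\lambda-r)t}h(0)\le h(\xi)$, and letting $t\to\infty$ with $\lambda>r$ forces $h(0)=0$. You instead reuse the first-order identity $(h'/\gamma)'=2(r-\lambda)h/(\sigma^2\gamma)$ from Appendix B, integrate on $(x,\xi)$, note both resulting summands are nonnegative, and turn the hypothesis $h(0)>0$ into the bound $h'(x)\ge (\lambda-r)h(0)\,|R(x)|$, so that $h'\in L^1(0,\xi)$ (which, incidentally, already follows from positivity and monotonicity alone, without assuming $h(0)>0$) yields $R\in L^1(0,\xi)$, i.e.\ accessibility of $0$, contradicting Assumption \ref{assume:X} (transferred from $\mathbb{Q}$ to $\mathbb{L}$ via the equivalence noted in Section \ref{sec:usual}, since $\gamma$ and $R$ are built from $k$, not $b$). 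Your argument is essentially a sign-flipped version of the $0$-boundary analysis in the proof of Theorem \ref{thm:usual_condi_iff_condi}: there $\lambda<\overline r$ and the contradiction is with the natural-boundary condition via $Q$, whereas with $\lambda>r$ the comparison lands directly on $R$, so only inaccessibility is needed — a nice observation that matches the lemma's hypotheses. What each buys: your route stays entirely within ODE/Feller-criterion machinery already set up in the appendices and makes transparent exactly which boundary property is used; the paper's route is shorter, bypasses boundary classification altogether, and even gives the quantitative decay $h(0)\le e^{-(\lambda-r)t}h(\xi)$.
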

\begin{proof}
	It can be easily shown that $e^{(\lambda-r) t}h(X_t)$ is a local martingale. Since a positive local martingale is a supermartingale, it follows that 
	$$\mathbb{E}[e^{(\lambda-r) t}h(X_t)]\leq h(X_0)=h(\xi)\;.$$
	Since $h$ is increasing, we have
	$$e^{(\lambda-r) t}h(0)\leq \mathbb{E}[e^{(\lambda -r)t}h(X_t)]\leq h(\xi)\;.$$
	Thus,  $h(0)\leq e^{-(\lambda-r) t}h(\xi).$ Letting $t\rightarrow\infty,$ we obtain the desired result. 
\end{proof}

\begin{lemma}\label{lemma:infty_condition}
	Let $\lambda<\overline{\lambda}.$  	Let $h$  be the function corresponding to $(\lambda,M_\lambda),$ respectively. 
	If $(\lambda,h)$ satisfies the martingale condition and $\int_{\xi}^{\infty}\gamma(x)\,dx=\infty,$ then $h$ is unbounded.
\end{lemma}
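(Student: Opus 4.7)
The plan is to argue by contradiction using the Wronskian of two linearly independent positive solutions of $\mathcal{L}h=-\lambda h$. Since $\lambda<\overline{\lambda}$, Proposition \ref{prop:slice} ensures that the slice $\{z:(\lambda,z)\in\mathcal{C}\}=[m_\lambda,M_\lambda]$ is nondegenerate, so in addition to $h:=h_{M_\lambda}$ there is a second positive solution $g:=h_{m_\lambda}$ with $g(\xi)=1$ and $g'(\xi)=m_\lambda<M_\lambda=h'(\xi)$.

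The core step is a Wronskian computation. Writing the equation in the standard form $h''+(2k/\sigma^2)h'+(2(\lambda-r)/\sigma^2)h=0$ and applying Abel's identity to $W:=hg'-h'g$, I get $W'=-(2k/\sigma^2)W$, hence $W(x)=W(\xi)\gamma(x)=(m_\lambda-M_\lambda)\gamma(x)$. Dividing by $h^2$ gives $(g/h)'=W/h^2=(m_\lambda-M_\lambda)\gamma/h^2$. Integrating from $\xi$ to $x$ and using $g(\xi)/h(\xi)=1$ yields
\begin{equation*}
\frac{g(x)}{h(x)}=1-(M_\lambda-m_\lambda)\int_\xi^x\frac{\gamma(s)}{h^2(s)}\,ds.
\end{equation*}

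Because $g,h>0$ on $(0,\infty)$, the right-hand side stays strictly positive for every $x$, which forces $\int_\xi^x \gamma/h^2\,ds<1/(M_\lambda-m_\lambda)$ uniformly in $x$; letting $x\to\infty$ yields $\int_\xi^\infty \gamma/h^2\,ds\leq 1/(M_\lambda-m_\lambda)<\infty$. If $h$ were bounded above by some $L<\infty$, then $\int_\xi^\infty \gamma/h^2\,ds\geq L^{-2}\int_\xi^\infty \gamma(s)\,ds=\infty$ by the hypothesis $\int_\xi^\infty\gamma=\infty$, which is a contradiction; hence $h$ is unbounded. The main difficulty is really just spotting this route: natural alternative attempts via the martingale criterion \eqref{eqn:criteria} or via an asymptotic analysis of the ODE at infinity produce indeterminate $\infty\cdot 0$ estimates, whereas the Wronskian identity bypasses them. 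An interesting byproduct of the argument is that it uses only $\lambda<\overline{\lambda}$ and $\int_\xi^\infty\gamma=\infty$, while the martingale hypothesis plays no role.
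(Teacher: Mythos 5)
Your argument is correct, but it is a genuinely different route from the paper's. The paper proves this lemma probabilistically: it uses the martingale condition together with Theorem \ref{thm:monotone_martingality} to pass to the transformed measure associated with $(\overline{\lambda},M_{\overline{\lambda}})$, writes $h(\xi)=\mathbb{E}^{\mathbb{P}}[(g^{-1}h)(X_t)]\,e^{-(\overline{\lambda}-\lambda)t}g(\xi)$ for $g=h_{M(\overline{\lambda})}$, bounds $g^{-1}h$ using Lemma \ref{lem:monotone} and the monotonicity of $g$ (which is where $\int_\xi^\infty\gamma=\infty$ enters, via Theorem \ref{thm:iff_condi_increasing}), and lets $t\to\infty$ to reach a contradiction. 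Your proof replaces all of this machinery with Abel's identity for the Wronskian of $h_{M_\lambda}$ and a second positive solution at the same $\lambda$, which yields $\int_\xi^\infty \gamma(s)h_{M_\lambda}^{-2}(s)\,ds\le (M_\lambda-m_\lambda)^{-1}<\infty$ and hence unboundedness whenever $\int_\xi^\infty\gamma=\infty$; this is shorter, purely ODE-theoretic, and shows the martingale hypothesis (and the constancy of $r$ assumed in this appendix) is not actually needed for the conclusion — a genuine strengthening, consistent with the paper's transience result for the diffusion induced by $(\lambda,M_\lambda)$, whose scale density is exactly $\gamma/h^2$. Two small points of bookkeeping: the nondegeneracy $m_\lambda<M_\lambda$ comes from Theorem \ref{thm:positive_r} (two linearly independent positive solutions for $\lambda<\overline{\lambda}$) rather than from Proposition \ref{prop:slice}, which only says the slice is the full interval; and you should either invoke the paper's assertion that the $\lambda$-slice of $\mathcal{C}$ is compact so that $(\lambda,m_\lambda)\in\mathcal{C}$, or simply take any positive normalized solution with slope $z<M_\lambda$ (which the cited existence result provides) — the same Wronskian computation then gives the bound $(M_\lambda-z)^{-1}$ and the argument is unchanged.
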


\begin{proof}
	Let $h$ and $g$ be the functions corresponding to $(\lambda,M_\lambda)$ and $(\overline{\lambda},M_{\overline{\lambda}}).$
	Since  $(\lambda,h)$ satisfies the martingale condition, so does $(\overline{\lambda},M_{\overline{\lambda}})$ by Theorem \ref{thm:monotone_martingality}.
	Let $\mathbb{P}$ be the transformed measure with respect to $(\overline{\lambda},M_{\overline{\lambda}}).$ 
	Then 
	\begin{align}\label{eqn:unbounded}
	h(\xi)=\mathbb{E}^\mathbb{Q}[e^{({\lambda}-r)t}h(X_t)]=\mathbb{E}^\mathbb{P}[(g^{-1}h)(X_t)]\,e^{-(\overline{\lambda}-\lambda)t}g(\xi)\;.
	\end{align}
	The first equality holds because  $(\lambda,h)$ satisfies the martingale condition.
	Suppose that $h$ is bounded, that is, $h<c$ for some constant $c.$ By Lemma \ref{lem:monotone},
	as in the proof of Theorem  \ref{thm:monotone_usual}, we know that  $h(x)<g(x)$ for $0<x<\xi$ and  $h(x)>g(x)$ for $x>\xi,$ that is, $(g^{-1}h)(x)<1$ for $0<x<\xi$ and  $(g^{-1}h)(x)<cg^{-1}(x)$ for $x>\xi.$
	Since $\int_{\xi}^{\infty}\gamma(x)\,dx=\infty,$ $g$ is increasing by Theorem \ref{thm:iff_condi_increasing}, so we know that $(g^{-1}h)(x)<cg^{-1}(x)<cg^{-1}(\xi)=c$ for $x>\xi.$
	Thus, it is obtained that $(g^{-1}h)(x)<c+1$ for all $x>0.$
	From equation \eqref{eqn:unbounded}, we have
	\begin{align*} 
	h(\xi)=\mathbb{E}^\mathbb{P}[(g^{-1}h)(X_t)]\,e^{-(\overline{\lambda}-\lambda)t}g(\xi)\leq(c+1)e^{-(\overline{\lambda}-\lambda)t}g(\xi)\;.
	\end{align*}	
	Letting $t\rightarrow\infty,$ it follows that $h(\xi)\leq0.$
	This leads us a contradiction.
\end{proof}

$ $

\begin{proof} We now prove Theorem \ref{thm:usual_condi_const_r}.
	Suppose that $\int_{\xi}^{\infty}\gamma(x)\,dx=\infty.$  Fix any $\lambda$ with $r<\lambda<\overline{\lambda}$ and let $h$ be the function corresponding to $(\lambda,M_\lambda).$ By Theorem \ref{thm:iff_condi_increasing}, it follows that $h'>0.$ By Lemma \ref{lemma:0_condition}, since $h$ is positive and increasing, $\lim_{x\rightarrow0+}h(x)=0.$ By Lemma \ref{lemma:infty_condition}, we have $\lim_{x\rightarrow\infty}h(x)=\infty.$ Thus, $h$ satisfies the usual condition, that is, 
	$(\lambda,M_\lambda)$ is in $\mathcal{U}.$ By Theorem \ref{thm:monotone_usual}, it is obtained that
	$\{(\lambda,M_\lambda)\,|\,\lambda< \overline{\lambda}\,\}\subseteq\mathcal{U},$ which is the desired result.
	
	We now prove that if $\infty$ is a natural boundary, then $\mathcal{U}=\{(\lambda,M_\lambda)\,|\,\lambda\leq \overline{\lambda}\,\}.$ 
	Let $\overline{h}$ be the function corresponding to $(\overline{\lambda},M_{\overline{\lambda}}).$ 
	By the same argument, we have $\overline{h}\,'>0$ and 
	$\lim_{x\rightarrow0+}\overline{h}(x)=0.$
	It suffices to show that 
	$\lim_{x\rightarrow\infty}\overline{h}(x)=\infty.$ 
	From equation \eqref{eqn:b}, we have
	\begin{equation*} 
	\overline{h}\,'(x)=\gamma(x)\left(\overline{h}'(\xi)-2(\overline{\lambda}-r)\int_{\xi}^{x}\frac{\overline{h}(s)}{\sigma^2(s)\gamma(s)}\,ds \right)\,.
	\end{equation*} 
	Because the term inside the parenthesis is a decreasing function of $x$ and the left-hand side $\overline{h}\,'(x)$ is positive for all $x,$ by letting $x\rightarrow\infty,$ it follows that $C:=\overline{h}\,'(\xi)-2(\overline{\lambda}-r)\int_{\xi}^{\infty}\frac{\overline{h}(s)}{\sigma^2(s)\gamma(s)}\,ds$ is nonnegative. The above equation can be written by
	\begin{equation}\label{eqn:o} 
	\overline{h}\,'(x)=\gamma(x)\left(C+2(\overline{\lambda}-r)\int_{x}^{\infty}\frac{\overline{h}(s)}{\sigma^2(s)\gamma(s)}\,ds \right)\geq2(\overline{\lambda}-r)\gamma(x)\int_{x}^{\infty}\frac{\overline{h}(s)}{\sigma^2(s)\gamma(s)}\,ds\,.
	\end{equation} 
	On the other hand, applying the Fubini theorem, we know that 
	$$\int_\xi^\infty\gamma(x)\int_x^{\infty}\frac{2}{\sigma^2(s)\gamma(s)}\,ds\,dx
	=\int_\xi^\infty\frac{2}{\sigma^2(s)\gamma(s)}\int_\xi^{s}\gamma(x)\,dx\,ds=\infty $$
	because $\infty$ is a natural boundary. Thus, 
	the integration of the right hand side of equation \eqref{eqn:o} becomes
	$\int_\xi^\infty\gamma(x)\int_x^{\infty}\frac{\overline{h}(s)}{\sigma^2(s)\gamma(s)}\,ds\,dx=\infty$ since $\overline{h}$ is an increasing function. This implies that $\lim_{x\rightarrow\infty}\overline{h}(x)=\infty.$  
	
	Now suppose that $\int_{\xi}^{\infty}\gamma(x)\,dx<\infty.$
	By Theorem \ref{thm:iff_condi_increasing},
	for any $\lambda$ with $r<\lambda\leq \overline{\lambda},$
	the corresponding function $h$ does not satisfy the  condition $h'>0.$ Thus, $\mathcal{U}\subseteq\{(\lambda,M_\lambda)\,|\,\lambda\leq r\,\}.$ When $\lambda=r,$ by the same argument in the proof of Proposition \ref{prop:U_when_overline_lambda_r}, we know that $(r,M_r)$  is not in $\mathcal{U},$ so $\mathcal{U}\subseteq\{(\lambda,M_\lambda)\,|\,\lambda<r\,\}.$
	On the other hand, $\{(\lambda,M_\lambda)\,|\,\lambda<r\,\}\subseteq\mathcal{U}$
	is clear by Theorem \ref{thm:usual_condi_iff_condi}. Therefore, 
	$\mathcal{U}=\{(\lambda,M_\lambda)\,|\,\lambda< r\,\}.$
\end{proof}

\section{An invariant property}
\label{sec:inv_prop}
Let $X_t$ be a diffusion process satisfying 
$$dX_t=k(X_t)\,dt+\sigma(X_t)\,dW_t$$
with the killing rate $r(x).$
Then the corresponding generator is 
$$\mathcal{L}h(x)=\frac{1}{2}\sigma^2(x){h''(x)}+k(x)h'(x)
-r(x)h(x)\;.$$
Fix an admissible pair $(\lambda,h)$ of the generator $\mathcal{L}.$

Let $(a,b)$ an open interval in $\mathbb{R}.$ Suppose that  $\pi:(0,\infty)\mapsto(a,b) $ is a continuously twice differentiable bijective map with continuously twice differentiable inverse. Then $\pi$ is  increasing or decreasing, so we may assume that $\pi$ is increasing.
Define $Y_t=\pi(X_t)$ and $H(y)=h(\pi^{-1}(y)).$
Then $Y_t$ satisfies
\begin{equation*}
\begin{aligned}
dY_t
&=(k\pi'+\frac{1}{2}\sigma^2\pi'')(X_t)\,dt+(\sigma\pi')(X_t)\,dW_t\\
&=(k\pi'+\frac{1}{2}\sigma^2\pi'')\circ(\pi^{-1}(Y_t))\,dt+(\sigma\pi')\circ(\pi^{-1}(Y_t))\,dW_t
\end{aligned}
\end{equation*}
with the killing rate $r(\pi^{-1}(y)).$
The corresponding generator is
$$\mathcal{L}^\pi H(y)=\frac{1}{2}(\sigma\pi')^2\circ(\pi^{-1}(y)){H''(y)}+(k\pi'+\frac{1}{2}\sigma^2\pi'')\circ(\pi^{-1}(y))H'(y)
-r(\pi^{-1}(y))H(y)\;.$$
The pair $(\lambda,H)$ is a solution pair of $\mathcal{L}^\pi H=-\lambda H.$ In addition, we have
$$\left.\frac{d\mathbb{P}_h}{d\mathbb{Q}\,}\right|_{\mathcal{F}_{t}}=e^{\lambda t}\,h(X_{t})\,G_t^{-1}=e^{\lambda t}H(Y_t)\,G_t^{-1}=\left.\frac{d\mathbb{P}_H}{d\mathbb{Q}\,}\right|_{\mathcal{F}_{t}}\;,$$
thus $(\lambda,h)$ and $(\lambda,H)$ induce the same transformed measures.
Since $X_t$ approaches to infinity as $t\rightarrow\infty$ under $\mathbb{P}_h$ (by the definition of an admissible pair), the process $Y_t=\pi(X_t)$  
approaches to the right boundary $b$ as $t\rightarrow\infty$ under $\mathbb{P}_H.$
It is clear that the function $H$ satisfies the usual condition (i),(ii)
and (iii), (iv) replaced the limits $$\lim_{x\rightarrow0+}h(x)=0\;,\;\lim_{x\rightarrow\infty}h(x)=\infty$$
by
$$\lim_{y\rightarrow a+}H(y)=0\;,\;\lim_{y\rightarrow b-}H(y)=\infty\;.$$

\bibliographystyle{plainnat}

\bibliography{do_prices}

\end{document}